\newcommand{\mean}[1]{\langle #1 \rangle}
\providecommand{\bgreek}[1]{\mbox{\boldmath$#1$}}
\newcommand{\Ham}{\mathcal{H}}
\newcommand{\Id}{\mathds{1}}
\def\one{\mathbbm{1}}
\theoremstyle{definition}
\theoremstyle{theorem}
\newtheorem{theorem}{Theorem}
\newtheorem*{theorem*}{Theorem}
\newtheorem{corollary}{Corollary}
\newtheorem*{corollary*}{Corollary}
\newtheorem*{lemma*}{Lemma}
\begin{document}

\newdimen\origiwspc%
\newdimen\origiwstr%
\preprint{}

\title{Time periodicity from randomness in quantum systems}

\author{Giacomo Guarnieri}
\email{giacomo.guarnieri@fu-berlin.de}
\affiliation{Dahlem Center for Complex Quantum Systems, Freie Universit\"{a}t Berlin, 14195 Berlin, Germany}
\affiliation{School of Physics, Trinity College Dublin, College Green, Dublin 2, Ireland}

\author{Mark T. Mitchison}
\affiliation{School of Physics, Trinity College Dublin, College Green, Dublin 2, Ireland}

\author{Archak Purkayastha}
\affiliation{School of Physics, Trinity College Dublin, College Green, Dublin 2, Ireland}

\author{Dieter Jaksch}
\affiliation{Clarendon Laboratory, University of Oxford, Parks Road, Oxford OX1 3PU, United Kingdom}
\affiliation{Centre for Quantum Technologies, National University of Singapore, 117543 Singapore}

\author{Berislav Bu\v{c}a}
\affiliation{Clarendon Laboratory, University of Oxford, Parks Road, Oxford OX1 3PU, United Kingdom}

\author{John Goold}
\affiliation{School of Physics, Trinity College Dublin, College Green, Dublin 2, Ireland}

\begin{abstract}
Many complex systems can spontaneously oscillate under non-periodic forcing. Such self-oscillators are commonplace in biological and technological assemblies where temporal periodicity is needed, such as the beating of a human heart or the vibration of a cello string. While self-oscillation is well understood in classical non-linear systems and their quantized counterparts, the spontaneous emergence of periodicity in quantum systems without a semi-classical limit is more elusive. Here, we show that this behavior can emerge within the repeated-interaction description of open quantum systems. Specifically, we consider a many-body quantum system that undergoes dissipation due to sequential coupling with auxiliary systems at random times. We develop dynamical symmetry conditions that guarantee an oscillatory long-time state in this setting. Our rigorous results are illustrated with specific spin models, which could be implemented in trapped-ion quantum simulators.
\end{abstract}

\date{\today}
\maketitle

Periodic dynamics is ubiquitous in our everyday experience and forms a convenient basis for understanding more complex time-dependent phenomena. Self-oscillators are an especially important class of oscillatory system, which spontaneously oscillate under non-periodic forcing~\cite{Jenkins_2013}. A familiar example is the harmonic tone produced when a steady stream of air is blown across the top of a glass bottle~\cite{Green2006}. This emergent regularity makes self-oscillators useful for technologies based on periodic motion, such as reciprocating engines and clocks. Recent years have seen a growth of interest in miniaturising such devices to the extreme limit where only a few quantum degrees of freedom are involved~\cite{Tonner2005,Gelbwaser2014,Roulet2017, Erker2017,Woods2018,Lindenfels2019,Schwarzhans2021}. This naturally raises the question of how self-oscillations can emerge from basic principles of quantum dynamics.

A standard approach to this problem is to start from a classical model of a self-oscillator and then quantize it. Perhaps the most prominent example is the van der Pol oscillator, which incorporates the non-linearity and dissipation necessary for self-oscillations in the classical domain. This model describes non-linear electrical circuits~\cite{Pol1934} and the semi-classical dynamics of a laser near threshold~\cite{Haken1975}, for instance. The quantized van der Pol oscillator has been found to exhibit a rich variety of behaviors including criticality~\cite{Ishibashi2017,Dutta2019} and quantum synchronisation~\cite{Lee2013,Loerch2016}, replicating known properties of the corresponding classical model. However, it remains unclear whether the rich physics of self-oscillation can arise in a quantum system without a clear classical analogue.

In this work, we demonstrate that periodic dynamics can emerge in quantum systems that are driven by dissipation processes distributed randomly in time. We describe how this self-oscillatory behavior can arise in quantum many-body systems, including those with a finite-dimensional Hilbert space and no semi-classical limit. Our approach is based on the so-called repeated-interaction scheme, or collision model~\cite{RauPR,Scarani2002,BuzekPRA,Campbell2021review}, where a stream of auxiliary systems interact sequentially with the system of interest. In general, collision models provide a versatile description of dissipation that reduces to a standard Lindblad equation in an appropriate limit of fast collisions~\cite{GiovannettiPRL2012,VacchiniPRL2016,LorenzoPRA2017}. Here, we consider the case where the collisions are separated by finite but \textit{random} time intervals~\cite{seah2019nonequilibrium}.

In this context, we provide precise conditions that guarantee the spontaneous emergence of periodic dynamics at long times. Our rigorous proof extends the concept of dynamical symmetries --- extensive or local algebraic conditions that have been recently studied in relation to Hamiltonian and Lindblad dynamics~\cite{buvca2019non,Marko1,buca2021algebraic,chinzei2020time,Cameron,scarsdynsym1,scardynsym2} --- to general linear quantum evolutions (quantum channels). Remarkably, we show that periodicity appears both at the level of individual random realizations and in the ensemble average. 
We illustrate our results with the example of a four sites XXZ spin ring with an excitation sink on one site~\cite{Dutta2020,Dolgirev2020,KollathQuantumWire2,Froeml2020,Popkov,Esslingerlocal1,Kuhr,Ortocat,DBA,alba2021noninteracting,Jamir3}. We argue that the self-oscillatory behavior of this model is generic and not a product of fine-tuning by demonstrating full non-perturbative stability to a wide set of generic external dissipative perturbations. We also discuss how the same physics may be observed in a long-range Ising model, as realized in recent experiments~\cite{Kim2009,Islam2013,Jurcevic2014,Jurcevic2015}. We work in units with $\hbar=1$ throughout.

\begin{figure}[b]
\includegraphics[width=\linewidth]{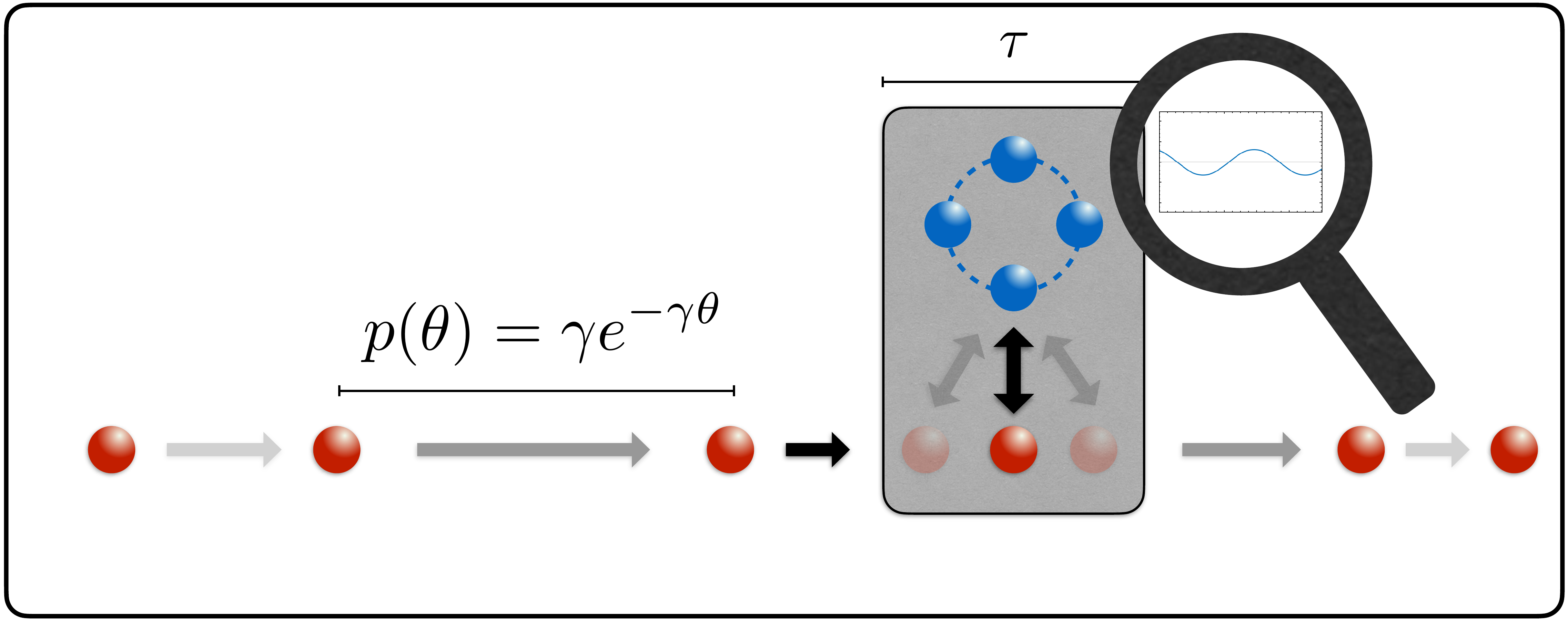}
\caption{Schematic representation of the randomized collision model, where a many-body system interacts with an ancilla for a fixed interaction time $\tau$ and where two consecutive interactions happen at a random time $\theta$.
}
\label{figschematic}
\end{figure}

\textit{Onset of time periodicity from randomized collision models.---}Consider a system $S$ with Hamiltonian $\hat{\Ham}_S$, initially prepared in a generic many-body state $\hat{\rho}_S^{(0)}$, and an environment consisting of a large number of non-interacting copies of another quantum system called the `ancilla' $A$. The free evolution of the system under $\hat{\Ham}_S$ is occasionally interrupted by an interaction with one of the ancillae over a time interval $\tau$, which is described by a joint unitary transformation $\hat{U}(\tau)$ acting on $S+A$ (see Fig.~\ref{figschematic}). Discarding the ancillae leads to an effective dissipative evolution for the system alone. The main advantage of this construction compared to standard models of open quantum systems theory is the replacement of a complex environment with a series of small ancillae, such that the consequent dynamics is analytically tractable but produces the same physics. Collision models have been recently employed to investigate the thermodynamics of open quantum systems under very general conditions~\cite{BarraSciRep,GabrieleNJP2018,guarnieri2020non} and can be generalised to include dynamical effects such as non-Markovianity~\cite{vacchini2016generalized,lorenzo2017quantum} or structured reservoir spectra~\cite{purkayastha2020periodically}.

Crucially, while each system-ancilla interaction is assumed to have a fixed duration $\tau$, the free evolution time $\theta$ in between two consecutive collisions is taken to be a random variable governed by a probability distribution $p(\theta)$. This scenario, which generalises the usual repeated-interaction scheme where $\theta$ is deterministic, was recently introduced in Ref.~\cite{seah2019nonequilibrium} to model the uncontrollable degrees of freedom of a thermal environment. Following the above picture, the dynamics after $n$ collisions have taken place leads to the state
\begin{equation}\label{totalmap}
    \hat{\rho}_S^{(n)} = \underbrace{\mathcal{U}_{S,\theta_n}\circ\Lambda_\tau\circ\mathcal{U}_{S,\theta_{n-1}}\circ\Lambda_{\tau}\circ\ldots\circ\mathcal{U}_{S,\theta_1}\circ\Lambda_\tau[}_{\text{n times}}\hat{\rho}_S^{(0)}],
\end{equation}
where $\theta_1,\ldots,\theta_n$ are $n$ possible outcomes of the random variable $\theta$, $\mathcal{U}_{S,\theta}[\bullet] = \hat{U}_S(\theta)\bullet\hat{U}^{\dagger}_S(\theta)$ denotes the free system evolution, and $\Lambda_{\tau} [\bullet] = \sum_k \hat{\Omega}_{k}(\tau) \bullet\hat{\Omega}_{k}^{\dagger}(\tau) $ represents the completely positive and trace-preserving (CPTP) map that describes a single collision, with Kraus operators $\hat{\Omega}_k(\tau) \equiv \hat{\Omega}_{\alpha\beta,\tau} = \sqrt{p_\alpha} \bra{\beta} \hat{U}(\tau) \ket{\alpha}$ and $\hat{\rho}_A = \sum_\alpha p_\alpha\ket{\alpha}\bra{\alpha}$. 

Let us now define the composite map $\tilde{\Lambda}_{\tau,\theta}[\bullet] \equiv \mathcal{U}_{S,\theta}\circ\Lambda_\tau[\bullet]$ and denote with a subscript $I$ the interaction picture with respect to the system's free evolution.
Due to its being CPTP, all the eigenvalues of $\tilde{\Lambda}_{\tau,\theta}$ lie inside or on the unit circle in the complex plane and there is always at least one right eigenvector with eigenvalue 1 (the stationary state) \cite{Wolf_2012,bruzda2009random,riera2020time}, which we denote by $\hat{\omega}^D$, such that $\tilde{\Lambda}_{\tau,\theta}\left[\hat{\omega}^D\right] = \hat{\omega}^D$. If there exist other eigenvalues with unit modulus then the corresponding eigenvectors oscillate under the action of $\tilde{\Lambda}_{\tau,\theta}$ and do not decay. These eigenvectors and corresponding eigenvalues, which lie on the unit circle, are referred to as the peripheral spectrum of $\tilde{\Lambda}_{\tau,\theta}$.
The following Theorem, which represents our main result, provides a set of precise conditions that guarantee the existence of such asymptotic oscillating states given a general quantum channel, yielding furthermore a direct way to explicitly construct them from the stationary state $\hat{\omega}^D$ (see Section B of Supplementary Material~\cite{SM} for the proof).

\begin{theorem}\label{theorem1}
Consider the CPTP map $\tilde{\Lambda}_{\tau,\theta} [\bullet] = \mathcal{U}_{S,\theta}\circ\Lambda_\tau[\bullet]$ and let $\hat{\omega}^D$ be its invariant state such that $\tilde{\Lambda}_{\tau,\theta}[\hat{\omega}^D] = \hat{\omega}^D$.
If there exists a system operator $\hat{\Xi}$ such that the following two conditions are satisfied
\begin{equation}
\mathrm{\mathbf{(i)}}\, \left[\hat{\Ham}_S, \hat{\Xi}\right] = \lambda \,\hat{\Xi}, \qquad \mathrm{\mathbf{(ii)}}\, \left[\hat{\Omega}_{k,I}(\tau), \hat{\Xi}\right] \hat{\omega}^D = 0,\,\forall k,\tau
\end{equation}
with $\lambda \in \mathbb{R}$, then the operator $\hat{\Xi}\,\hat{\omega}^D$ will evolve according to
\begin{equation}\label{theorem:Evol}
\tilde{\Lambda}_{\tau,\theta}[\hat{\Xi} \hat{\omega}^D] = e^{-i\lambda (\tau+\theta)} \,\hat{\Xi}\,\hat{\omega}^D.
\end{equation}
\end{theorem}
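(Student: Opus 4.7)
My plan is to decompose $\tilde{\Lambda}_{\tau,\theta}$ into a dissipative piece through which $\hat{\Xi}$ commutes by condition (ii), followed by a unitary free-evolution piece through which $\hat{\Xi}$ only picks up a phase by condition (i). To set this up I first rewrite the collision channel in the interaction picture: since the ancilla carries no $\hat{\Ham}_S$, the joint propagator splits as $\hat{U}(\tau) = [\hat{U}_S(\tau)\otimes\one_A]\,\hat{U}_I(\tau)$, so the Kraus operators factor as $\hat{\Omega}_k(\tau) = \hat{U}_S(\tau)\,\hat{\Omega}_{k,I}(\tau)$. Consequently $\Lambda_\tau = \mathcal{U}_{S,\tau}\circ\Lambda_{\tau,I}$ and
\begin{equation*}
\tilde{\Lambda}_{\tau,\theta} = \mathcal{U}_{S,\tau+\theta}\circ\Lambda_{\tau,I},
\end{equation*}
with $\Lambda_{\tau,I}[\bullet] = \sum_k \hat{\Omega}_{k,I}(\tau)\bullet\hat{\Omega}_{k,I}^\dagger(\tau)$ the Kraus sum in the interaction picture.

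Next I push $\hat{\Xi}$ through each sub-map. Condition (ii) rearranges to $\hat{\Omega}_{k,I}(\tau)\,\hat{\Xi}\,\hat{\omega}^D = \hat{\Xi}\,\hat{\Omega}_{k,I}(\tau)\,\hat{\omega}^D$ for every $k$, so summing the Kraus sandwich term-by-term yields $\Lambda_{\tau,I}[\hat{\Xi}\hat{\omega}^D] = \hat{\Xi}\,\Lambda_{\tau,I}[\hat{\omega}^D]$. Condition (i) is the standard ladder-operator relation and exponentiates to $\hat{U}_S(t)\,\hat{\Xi}\,\hat{U}_S^\dagger(t) = e^{-i\lambda t}\hat{\Xi}$, hence $\mathcal{U}_{S,t}[\hat{\Xi}\,X] = e^{-i\lambda t}\,\hat{\Xi}\,\mathcal{U}_{S,t}[X]$ for any operator $X$. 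Taking $t = \tau+\theta$ and $X = \Lambda_{\tau,I}[\hat{\omega}^D]$, and invoking the stationarity $\tilde{\Lambda}_{\tau,\theta}[\hat{\omega}^D]=\hat{\omega}^D$, I conclude
\begin{equation*}
\tilde{\Lambda}_{\tau,\theta}[\hat{\Xi}\hat{\omega}^D] = e^{-i\lambda(\tau+\theta)}\,\hat{\Xi}\,\tilde{\Lambda}_{\tau,\theta}[\hat{\omega}^D] = e^{-i\lambda(\tau+\theta)}\,\hat{\Xi}\,\hat{\omega}^D,
\end{equation*}
which is the claimed identity~(\ref{theorem:Evol}).

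The most delicate point is the first step: cleanly extracting the system's free evolution from each Kraus operator so that the channel factors as $\mathcal{U}_{S,\tau}\circ\Lambda_{\tau,I}$. This is implicit in the theorem's use of the notation $\hat{\Omega}_{k,I}(\tau)$ and relies on the ancilla basis entering the Kraus representation being $\hat{\Ham}_S$-independent, which is the standard collision-model setting. A secondary subtlety is that condition (ii) constrains only the product $\hat{\Omega}_{k,I}(\tau)\,\hat{\Xi}\,\hat{\omega}^D$ and not its adjoint; but this is exactly what is needed, because the $\hat{\Omega}_{k,I}^\dagger(\tau)$ in the Kraus sandwich sits to the right of $\hat{\omega}^D$ and is never asked to pass through $\hat{\Xi}$. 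Once these bookkeeping points are granted, conditions (i) and (ii) are tailor-made to slide $\hat{\Xi}$ through the two commuting sub-maps, and the remainder is routine algebra.
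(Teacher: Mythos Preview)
Your proof is correct and follows essentially the same route as the paper's own argument: the paper likewise writes $\tilde{\Lambda}_{\tau,\theta} = \mathcal{U}_{S,\tau+\theta}\circ\Lambda_{I,\tau}$, uses condition~(ii) to pull $\hat{\Xi}$ out of the Kraus sum, uses condition~(i) to obtain $\mathcal{U}_{S,t}[\hat{\Xi}] = e^{-i\lambda t}\hat{\Xi}$ (there via the differential equation $\frac{d}{dt}\mathcal{U}_{S,t}[\hat{\Xi}] = -i\lambda\,\mathcal{U}_{S,t}[\hat{\Xi}]$ rather than your direct ``exponentiation'' remark, but these are equivalent), and then invokes the stationarity of $\hat{\omega}^D$. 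The only additional content in the paper's proof is a preliminary paragraph showing that condition~(i) already forces $\lambda\in\mathbb{R}$, which in the statement you were given is taken as a hypothesis and so need not be established.
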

Note that Theorem~\ref{theorem1} holds for any dynamics described by a CPTP map, and thus is valid beyond the randomized collision model considered here. Physically, conditions (i) and (ii) characterize the operator $\hat{\Xi}$ as a \textit{generalized dynamical symmetry}~\footnote{Formally, the operator $\hat{\Xi}$ thus corresponds to a generalized rotation on the unit circle of the subspace of invariant states. It is possible to prove~\cite{lindblad1999general} that condition (ii) in fact is satisfied if $\hat{\Xi}$ belongs to a matrix sub-algebra of invariant operators of the adjoint map $\tilde{\Lambda}_{\tau,\theta}$, i.e. $\tilde{\Lambda}^{\dagger}[\hat{\Xi}] = \hat{\Xi}$ which defines dynamical symmetries and conserved quantities of the dynamical map. The converse is also true.}. In particular, condition~(i) defines a dynamical symmetry of the system's autonomous evolution~\cite{buvca2019non} while condition (ii)~expresses the requirement that this symmetry must be insensitive to the dissipation. Verifying the latter condition can be quite demanding in the presence of a generic environment; however, this complexity is substantially reduced in the case of collision models due to the simplicity of the ancillae. The oscillation frequency $\lambda$ is clearly unrelated to any timescale of the system-environment interaction, e.g.~$\tau,\theta$, and solely depends on the spectrum of $\hat{\Ham}_S$. We finally point out that, due to the symmetry of the spectrum under complex conjugation of any generic CPTP map, if a dynamical symmetry $\hat{\Xi}$ relative to eigenfrequency $\lambda$ exists, also $\hat{\Xi}^{\dagger}$ is a dynamical symmetry relative to $\lambda^*$.

\begin{figure*}[htbp!]
\begin{center}
\hspace*{-1cm}
\begin{tikzpicture} 
  \node (img1)  {\includegraphics[width=0.33\linewidth]{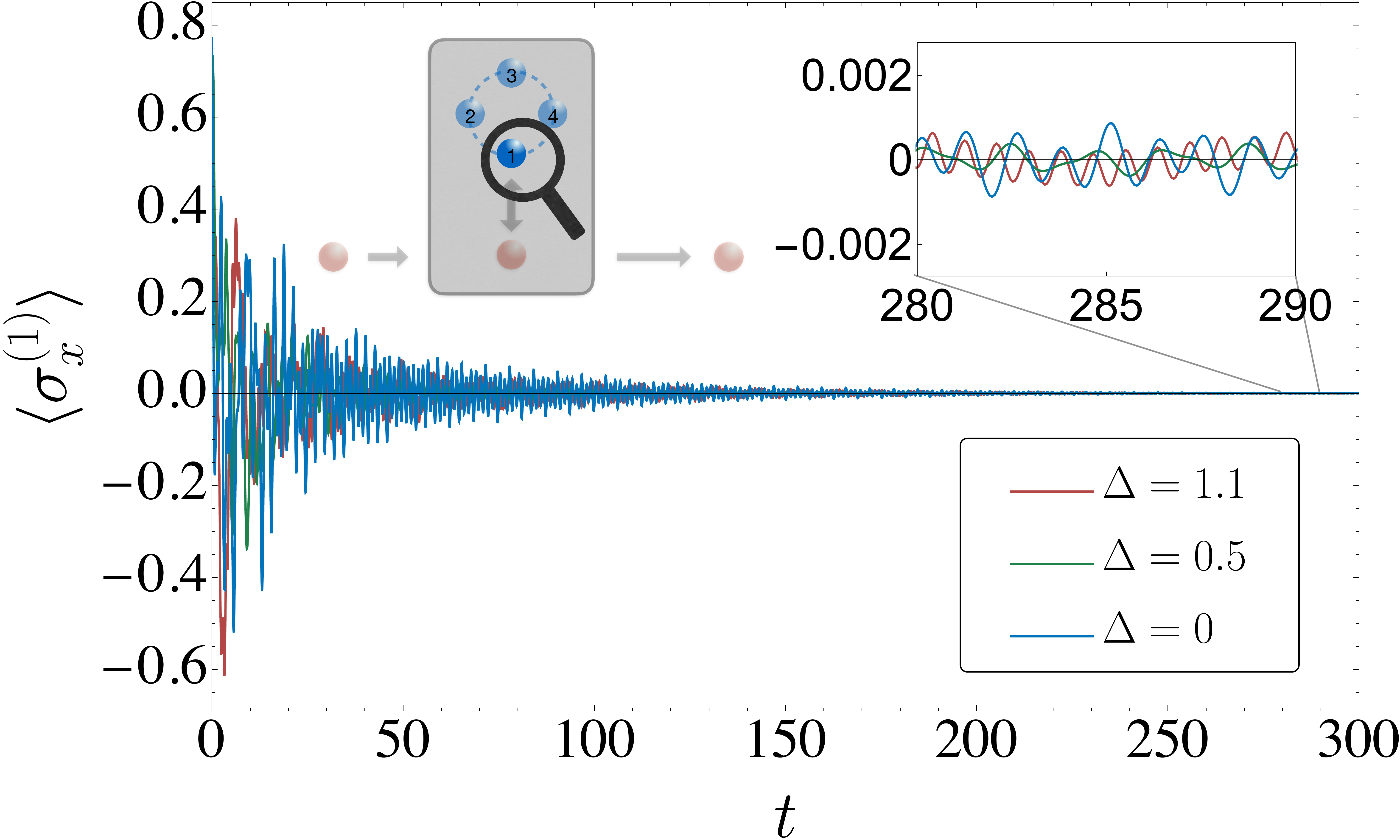}};
    \node[above=of img1, node distance=0cm, yshift=-1.7cm,xshift=-1.8cm] {{\color{black}{\bf{(a)}}}};
\end{tikzpicture}
\begin{tikzpicture} 
  \node (img2)  {\includegraphics[width=0.33\linewidth]{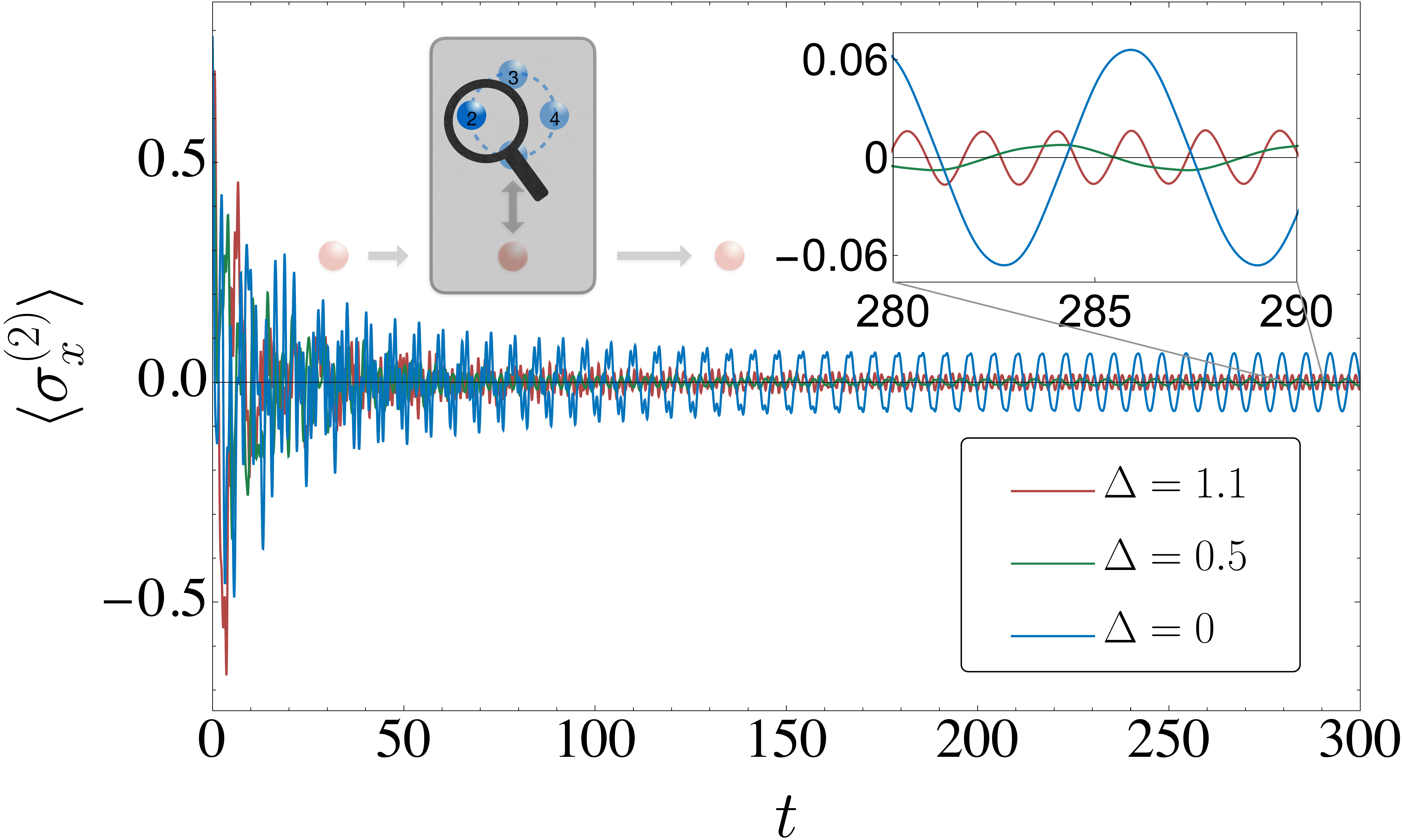}};
    \node[above=of img2, node distance=0cm, yshift=-1.7cm,xshift=-1.8cm] {{\color{black}{\bf{(b)}}}};
\end{tikzpicture}
\begin{tikzpicture} 
  \node (img3)  {\includegraphics[width=0.33\linewidth]{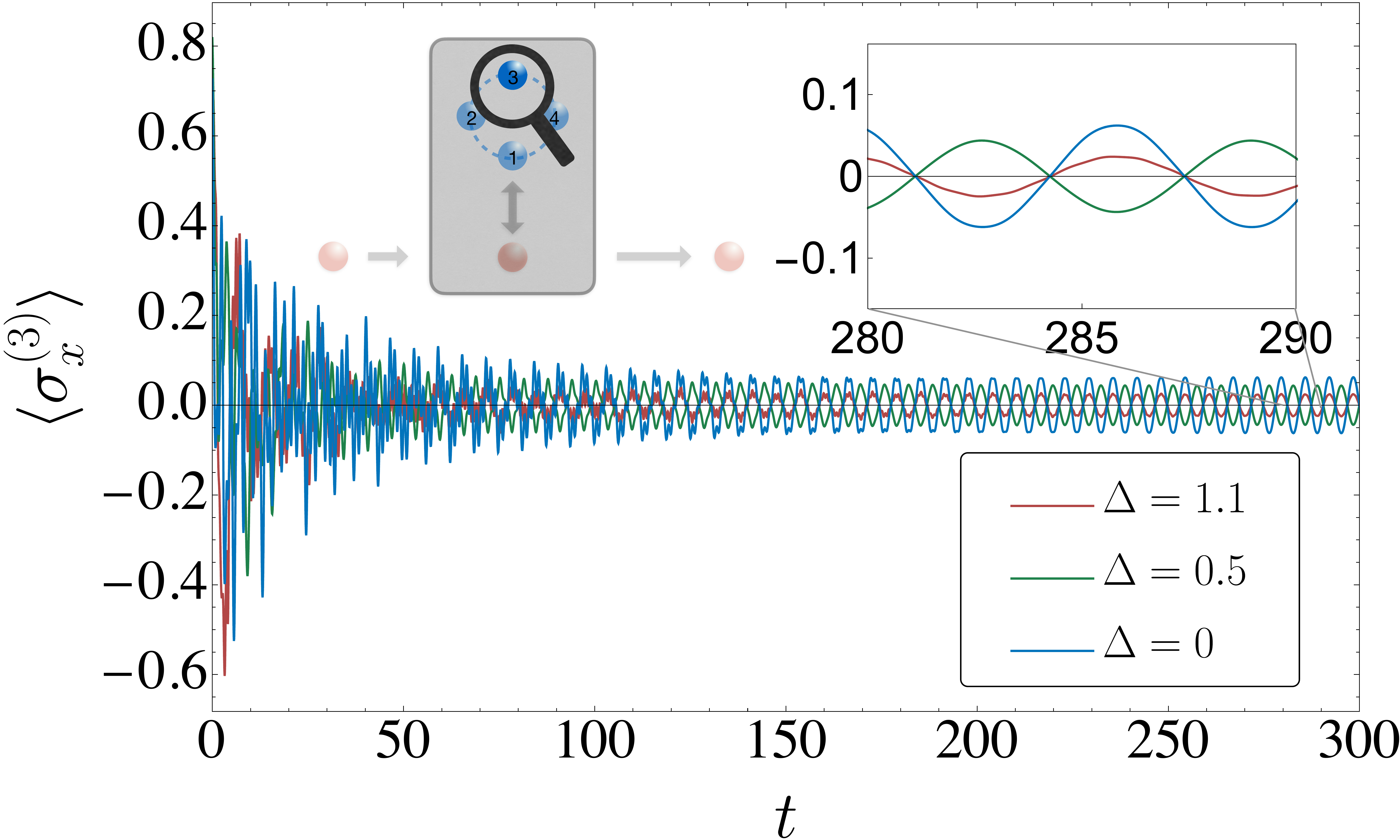}};
    \node[above=of img3, node distance=0cm, yshift=-1.7cm,xshift=-1.8cm] {{\color{black}{\bf{(c)}}}};
\end{tikzpicture}\\
\hspace*{-1.5cm}
\begin{tikzpicture} 
  \node (img4)  {\includegraphics[width=0.348\linewidth]{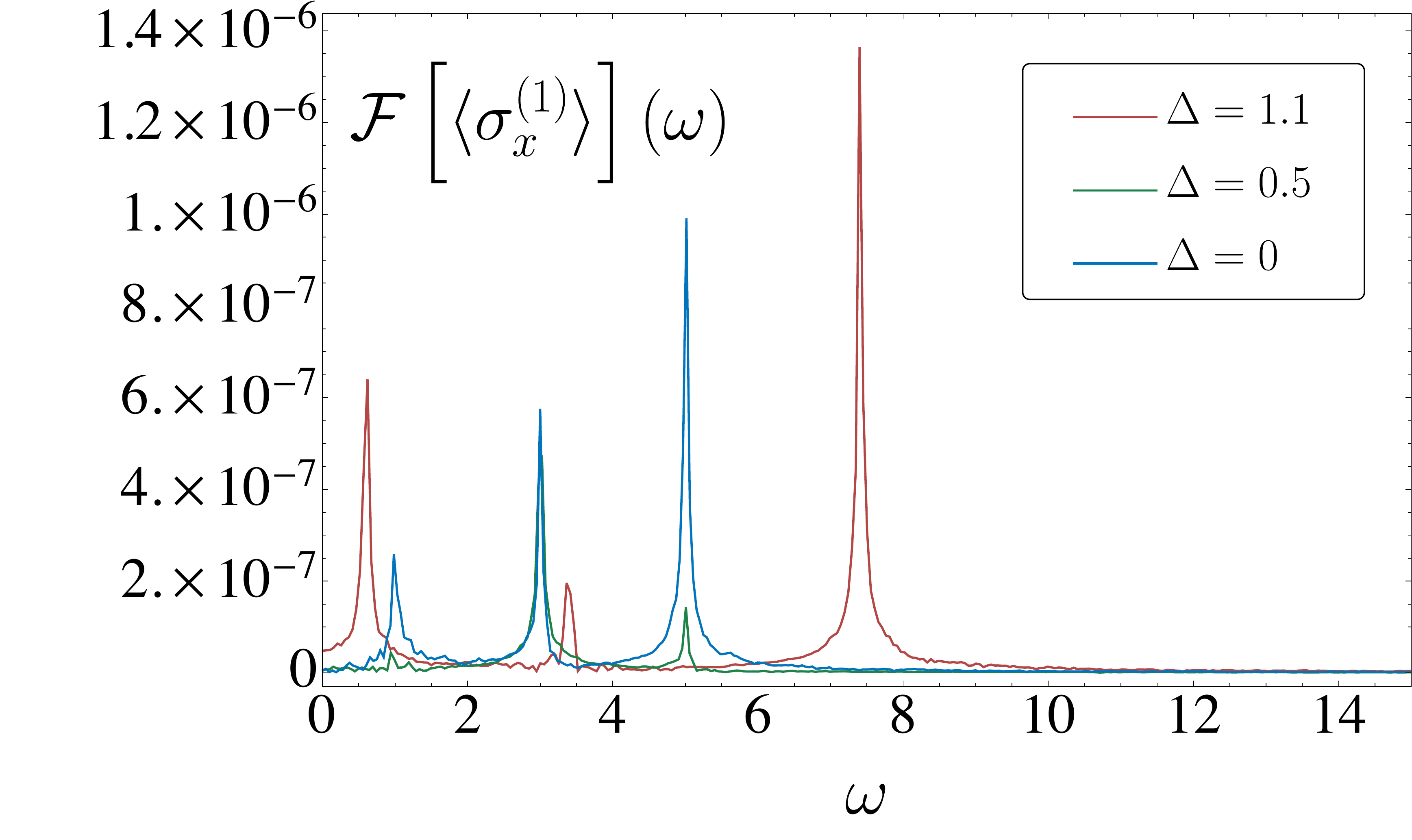}};
    \node[above=of img4, node distance=0cm, yshift=-4cm,xshift=2.1cm] {{\color{black}{\bf{(d)}}}};
\end{tikzpicture}
\begin{tikzpicture} 
  \node (img5)  {\includegraphics[width=0.33\linewidth]{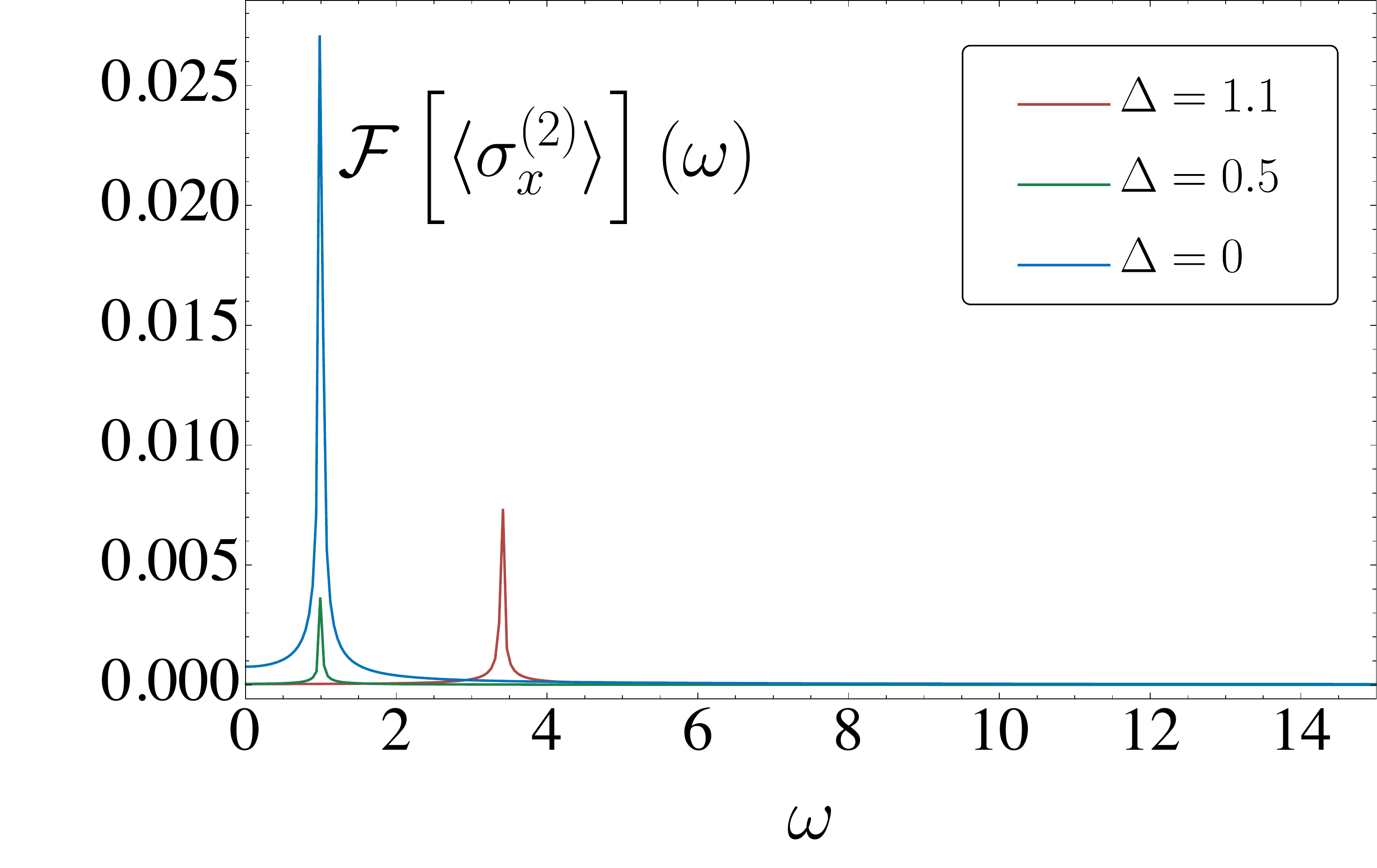}};
    \node[above=of img5, node distance=0cm, yshift=-4cm,xshift=2.1cm] {{\color{black}{\bf{(e)}}}};
\end{tikzpicture}
\begin{tikzpicture} 
  \node (img6)  {\includegraphics[width=0.33\linewidth]{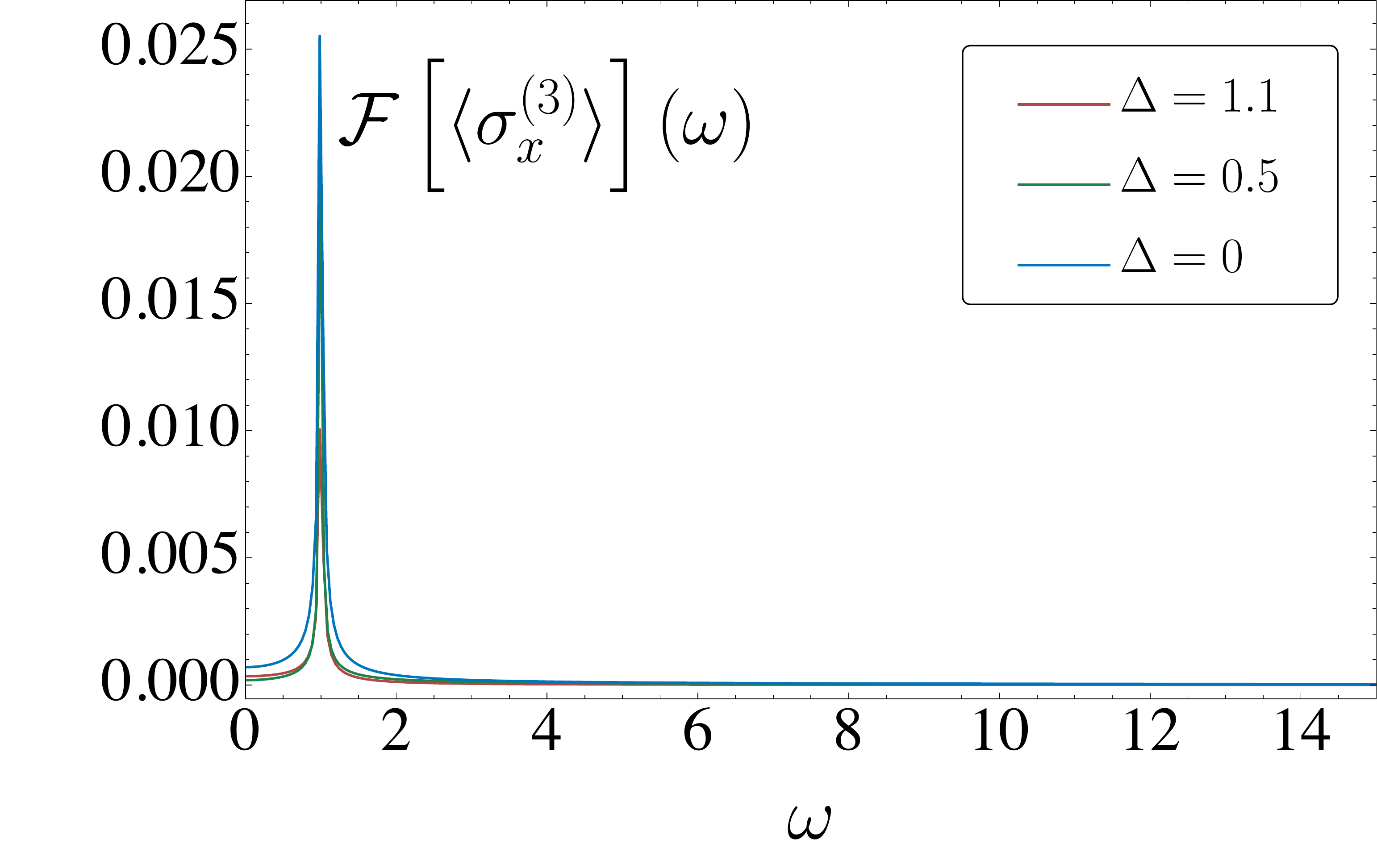}};
    \node[above=of img6, node distance=0cm, yshift=-4cm,xshift=2.1cm] {{\color{black}{\bf{(f)}}}};
\end{tikzpicture}
\caption{\textit{Upper panels:} Plots of $\langle\hat{\sigma}_x^{(k)}\rangle$, with (a) $k=1$ (i.e. site where the system-ancilla interaction takes place) (b) $k=2$ (i.e. adjacent site; due to symmetry reasons, the site $k=4$ shows the same behavior and thus is not plotted) and (c) $k=3$ (i.e. opposite site) for different values of the parameter $\Delta$, as reported in the legend, and for $\omega_0 = 1$, $\gamma = 0.5$ and $\tau = 1$. The respective insets show a zoom of these oscillations in the long-time limit, which are present for $k = 2,3$. 
\textit{Lower panels:} Plots of the absolute value of the corresponding long-time limit spectra $\mathcal{F}\left[\langle\hat{\sigma}_x^{(k)}\rangle\right] = \lvert \int_{\mathbb{R}} dt e^{-i\omega t} \langle\hat{\sigma}_x^{(k)}(t)\rangle \Theta(t-t^*)\rvert$, with $\Theta(t-t^*)$ being the Heaviside step function and $t^* = 250$ being an arbitrarily chosen large value of time after which the above oscillations, when present, have appeared. The peaks in panel (e) are at frequency $\omega=\omega_0-4\Delta$, while the peaks in panel (f) are at $\omega=\omega_0$. }
\label{figXXZchain}
\end{center}
\end{figure*}

An important consequence of Eq.~\eqref{theorem:Evol} is the onset of stable oscillations in generic (possibly local) system observables at long times. Here, the structure of repeated-interaction models, combined with Theorem~\ref{theorem1}, plays a key role. Using the fact that all quantum channels $\tilde{\Lambda}_{\tau,\theta}$ share the same peripheral spectrum~\footnote{The rest of the spectrum may be different due to the random nature of $\theta$.}, we demonstrate that taking the limit of many collisions $n \gg 1$ automatically singles out the oscillatory asymptotic states $\hat{\Xi}\hat{\omega}^D$ (see Supplementary Material, Section C, for a detailed proof~\cite{SM}). We emphasize that the above considerations hold true \textit{for generic choices of $\tau$ and $\theta$}.
As a consequence, the expectation value of any system observable $\hat{O}$ at time $t = n\tau + \sum_{j=1}^n \theta_j$, where $n\gg 1$, becomes 
\begin{align}\label{orderparameter}
    \mean{\hat{O}(t)} & \simeq r_0 \mathrm{Tr}\left[\hat{O} \,\hat{\omega}^D\right] \notag \\
    & \quad + \sum_\alpha e^{i\lambda_\alpha t} r_\alpha \mathrm{Tr}\left[ \hat{O} \, \hat{\Xi}_\alpha\,\hat{\omega}^D\right] + {\rm h.c.},
\end{align}
where $r_0 = \mathrm{Tr} \left[\hat{\omega}^D\rho_S^{(0)}\right]$, $r_\alpha = \mathrm{Tr} \left[\hat{\omega}^D_{\alpha}\rho_S^{(0)}\right]$ and $\hat{\omega}^D_\alpha \equiv \hat{\Xi}_\alpha\hat{\omega}^D$, and where we allow for multiple dynamical symmetries labelled by $\alpha$. Therefore, under the mild assumption that the initial state of the system and the chosen observable have non-zero overlap with at least one $\hat{\omega}^D_\alpha$, we find that oscillations spontaneously arise as a result of Eq.~\eqref{totalmap}, i.e. due to the randomized collisions with the ancillae. This periodicity manifests even at the level of a single realization of the stochastic process, provided that the many-body system under consideration features dynamical symmetries satisfying Theorem~\ref{theorem1}.

It is instructive to consider what happens when the waiting time $\theta$ takes a fixed, deterministic value. In the regime of ultra-fast collisions $\theta\ll\tau\ll 1$, it is known that the dynamics described by Eq.~\eqref{totalmap} becomes equivalent to a Markovian master equation in Lindblad form~\cite{GiovannettiPRL2012,VacchiniPRL2016,LorenzoPRA2017}. Furthermore, due to the constancy of $\theta$, the underlying dynamics Eq.~\eqref{totalmap} acquires a discrete time-translation symmetry, with period $\tau+\theta$.
In Section B of the Supplementary Material~\cite{SM} we explicitly show that conditions (i) and (ii) of Theorem~\ref{theorem1} reduce in this limit to the set of conditions first derived in~\cite{buvca2019non} that guarantee the spontaneous time-translation symmetry breaking and the manifestation of dissipative time-crystals for Lindblad dynamics. 

\textit{Example: the XXZ spin ring.---}We now demonstrate our results with an application to physically relevant many-body system. Consider a uniform Heisenberg XXZ spin chain with four sites arranged in a one-dimensional ring~\cite{takahashi2005thermodynamics, mendoza2013heat,BucaProsen2012}. The Hamiltonian is given by
\begin{align}
    \hat{\Ham}_S &= \sum_{j=1}^{M} \left[\hat{\sigma}_x^{(j)} \hat{\sigma}_x^{(j+1)} + \hat{\sigma}_y^{(j)} \hat{\sigma}_y^{(j+1)}\right] \notag\\
    &\quad+ \sum_{j=1}^{M} \left[\Delta \hat{\sigma}_z^{(j)} \hat{\sigma}_z^{(j+1)} + \frac{\omega_0}{2}\hat{\sigma}_z^{(j)}\right],
\end{align}
where $\hat{\sigma}_{x,y,z}^{(k)} \equiv \hat{\Id}^{k-1} \otimes \hat{\sigma}_{x,y,z}\otimes \hat{\Id}^{M-k} $ denotes a Pauli matrix acting on site $(k)$ of the chain, with periodic boundary conditions imposed, $\hat{\sigma}_{x,y,z}^{(M+1)} \equiv \hat{\sigma}_{x,y,z}^{(1)}$. The above system interacts with a stream of identical ancillae through a repeated interaction scheme as in Fig.~\ref{figschematic}. Specifically, the system-ancilla Hamiltonian is $\hat{\Ham}_{SA} = g \left(\hat{\sigma}_x^{(A)}\hat{\sigma}_x^{(1)} + \hat{\sigma}_y^{(A)}\hat{\sigma}_y^{(1)}\right)$,
where $g = \sqrt{\Gamma/4\tau}$. Finally, the ancillae are assumed to be initialized in the ground state $\hat{\rho}_A = \ket{0}^{(A)}\bra{0}^{(A)}$, where the standard computational basis states $\ket{0}^{(j)}$ and $\ket{1}^{(j)}$ denote the eigenvectors of $\hat{\sigma}_z^{(j)}$. For $M=4$, we find that there exists two dynamical symmetries satisfying conditions (i) and (ii) of Theorem~\ref{Theorem1} of the following form. The first, corresponding to the eigenvalue $\lambda_1 = -\omega_0$, is
\begin{align}
\label{Xi1}
&\hat{\Xi}_1 = \hat{\Id} \otimes \ket{\psi_1}\bra{\phi_1}, \nonumber \\
&\ket{\psi_1}=|0\rangle^{(2)}|0\rangle^{(3)}|1\rangle^{(4)}-|1\rangle^{(2)}|0\rangle^{(3)}|0\rangle^{(4)}, \nonumber \\
& \ket{\phi_1}=|0\rangle^{(2)}|1\rangle^{(3)}|1\rangle^{(4)}-|1\rangle^{(2)}|1\rangle^{(3)}|0\rangle^{(4)},
\end{align}
 with $\hat{\Id}$ being the identity at site 1, the site which couples to the ancilla. The second, corresponding to eigenvalue $\lambda_2 =\omega_0-4\Delta$,  is 
 \begin{align}
 \label{Xi2}
& \hat{\Xi}_2 = \ket{\psi_2}\bra{\phi_2},     \nonumber \\
& \ket{\psi_2}=\frac{1}{2}\left(|0\rangle^{(1)}|1\rangle^{(2)}|0\rangle^{(3)}|0\rangle^{(4)}-|0\rangle^{(1)}|0\rangle^{(2)}|0\rangle^{(3)}|1\rangle^{(4)}\right) \nonumber \\
& \ket{\phi_2}=|0\rangle^{(1)}|0\rangle^{(2)}|0\rangle^{(3)}|0\rangle^{(4)}.
 \end{align}

The existence of two distinct dynamical symmetries is reflected in the rich behavior of local observables on different sites of the ring. Figure~\ref{figXXZchain} shows the time evolution of $\mean{\hat{\sigma}_x^{(k)}(t)}$, for $k=1,2,3$, and their corresponding long-time Fourier spectra (in absolute value), with $\omega_0 = 1$, $\tau = 1$, and various values of $\Delta$. We take the time interval between collisions to be drawn from an exponential distribution $p(\theta) = \gamma e^{-\gamma\theta}$, with $\gamma=0.5$. In addition, we choose the initial density matrix $\hat{\rho}_S^{(0)}$ to be a random pure state. Remarkably, despite this random initial condition and random collisional dynamics, we observe stable oscillations emerging predictably at the level of single stochastic realizations and in all parameter regimes considered.

Spin observables on the site where the system-ancilla interaction takes place have zero overlap --- in the Hilbert-Schmidt sense of Eq.~\eqref{orderparameter} --- with both the dynamical symmetries $\hat{\Xi}_{1,2}$ and consequently $\mean{\hat{\sigma}_x^{(1)}(t)}$ shows no oscillation for any choice of parameters (see Fig.~\ref{figXXZchain}\textbf{(a)}). This behavior is also reflected in the absence of any systematic peak in the long-time spectrum, Fig.~\ref{figXXZchain}\textbf{(d)}). However, periodicity emerges for
observables defined locally on one of the adjacent sites (sites $2$ and $4$, for symmetry reasons, behave in the exact same way; see Fig.~\ref{figXXZchain}\textbf{(b)}) as well as the opposite site (i.e.~site $3$, see Fig.~\ref{figXXZchain}\textbf{(c)}).
Since the former have non-zero overlap with $\hat{\Xi}_2$, the resulting oscillations have frequency $\lambda_2 = \omega_0 - 4\Delta$, as evident from the Fourier analysis, see Fig.~\ref{figXXZchain}\textbf{(e)}. It is therefore possible to tune the oscillation period by changing the asymmetry parameter $\Delta$ of the many-body system. Conversely, $\hat{\sigma}_x^{(3)}$ only has overlap with the other dynamical symmetry $\hat{\Xi}_1$ and thus the resulting oscillations become insensitive to $\Delta$, being affected only by the on-site energy $\omega_0$ (see Fig.~\ref{figXXZchain}\textbf{(f)}). Moreover, since $\hat{\Xi}_1$ has identity at the site which couples to the ancilla (see Eq.(\ref{Xi1})), it is unaffected by any non-perturbative change of system-ancilla interaction. As a result, the oscillations in $\mean{\hat{\sigma}_x^{(3)}(t)}$ are \textit{non-perturbatively} robust against modifications of the system-ancilla interaction Hamiltonian~\cite{spinlace}.


To provide further evidence of the generality of this phenomenon, in the Supplemental Material we consider a long-range Ising chain with a similar eigenmode structure to the XXZ model, which has been realised in recent ion-trap experiments~\cite{Kim2009,Islam2013,Jurcevic2014,Jurcevic2015}. Despite its lack of translation invariance or total spin conservation, the long-range Ising model is also found to exhibit very stable oscillations under random repeated interactions. We note that the collisional dynamics could be implemented by a single qubit that is reset to the ground state after each interaction, e.g.~via optical pumping in an ion-trap setting.

\textit{Conclusions.---}In this Letter, we have demonstrated the onset of periodic time dynamics in a quantum many-body system even though the underlying evolution is random, and therefore has neither discrete nor continuous time translation symmetry. In a sense, this represents the opposite of a time crystal~\cite{Wilczek2012} because the long-time behavior of observables --- being invariant under discrete translations by the oscillation period --- has \textit{more} symmetry than the governing equations of motion, not less. The emergence of a stable time reference from timeless resources is a problem of perennial interest in quantum physics, with relevance for cosmology~\cite{Page1983}, thermodynamics~\cite{Erker2017,Schwarzhans2021}, and precision measurement~\cite{Ludlow2015}. Our work shows that dynamical symmetries provide a natural mechanism for this temporal regularity to arise from randomness in finite-dimensional quantum systems.


The generality of our formal result and the versatility of the repeated-interaction framework immediately suggest several directions for future research. These include optimization strategies to produce oscillations in certain target observables, and a careful analysis of the thermodynamic cost of maintaining a stable time-periodic state.

\textit{Acknowledgments.---}We acknowledge support from the European Research Council Starting Grant ODYSSEY (G. A. 758403).
GG kindly acknowledges funding from from European Unions Horizon 2020 research and innovation programme under the
Marie Sk\l{}odowska-Curie grant agreement No. 101026667, from FQXi and DFG FOR2724. MTM and JG acknowledge support from the EPSRC-SFI Joint Funding of Research grant QuamNESS. BB and DJ acknowledge funding from EPSRC programme grant EP/P009565/1, EPSRC National Quantum Technology Hub in Networked Quantum Information Technology (EP/M013243/1), and the European Research Council under the European Union's Seventh Framework Programme (FP7/2007-2013)/ERC Grant Agreement no. 319286, Q-MAC. JG is supported by a SFI-Royal Society University Research Fellowship. AP acknowledges funding from European Union’s Horizon 2020 research and innovation programme under the Marie Sk\l{}odowska-Curie grant agreement No. 890884. BB is grateful to M. Medenjak for numerous useful discussions on the lossy XXZ spin ring and collaboration on related subjects.
GG is grateful to J. Eisert and A. Riera-Campeny for insightful discussions on time-crystals in many-body open quantum systems. We thank F. Schmidt-Kaler for suggesting the long-range Ising model.

\bibliographystyle{apsrev4-1}
\bibliography{bibliographyCTC}

\widetext
\clearpage
\begin{center}
\textbf{\large Supplementary Material}\\
\end{center}
\setcounter{equation}{0}
\setcounter{figure}{0}
\setcounter{table}{0}
\setcounter{page}{1}
\makeatletter
\renewcommand{\theequation}{S\arabic{equation}}
\renewcommand{\thefigure}{S\arabic{figure}}

In this Supplementary Material we provide all the detailed calculations of the results presented in the main text, as well as some additional considerations concerning the models discussed.
Note that throughout the paper we set $\hbar = 1$ and $k_B=1$.
The organization is as follows: in Section~\ref{Sec:Notation} we provide some basic notions of open quantum systems theory which help clarifying the notation used throughout this paper. Section~\ref{Sec:Proofs} contains the rigorous proofs of Theorem 1 and a following Corollary. Section~\ref{App:CPTPmapsproperties} includes important additional considerations regarding spectral properties of generic completely-positive and trace-preserving (CPTP) maps; it furthermore provides the rigorous proof that the sequential application of CPTP maps sharing the same peripheral spectrum allows to single out the eigenvectors generating the latter. This consideration, in combination with Theorem 1, lead to the main result of this paper.
Finally, this Supplementary Material concludes with Section~\ref{Sec:IsingModel} which is dedicated to the analysis of the long-range Ising model, an additional example recently implemented in ion-trap architectures.

\section{A --- Mathematical preliminaries and useful notation.}
\label{Sec:Notation}

Let us consider an open quantum system consisting of a system $S$ interacting with an environment $E$. The total Hamiltonian is given by $\hat{\Ham} = \hat{\Ham}_S + \hat{\Ham}_E + \hat{\Ham}_{SE}$. Assume moreover that the system and the bath are initially prepared in a factorized state, i.e. $\hat{\rho}_{SE}(0) = \hat{\rho}_S(0)\otimes\hat{\rho}_E$. Then, if we denote with $\tau(\mathscr{H})$ the set of trace 1 positive linear operators, i.e. density matrices, on the Hilbert space $\mathscr{H}$, the total unitary evolution dictating the evolution of the full composite system induces a \textit{completely positive and trace-preserving} (CPTP) map $\Lambda_t \, : \tau(\mathscr{H}_S) \longrightarrow \tau(\mathscr{H}_S)$ which can be expressed in Kraus form as
\begin{equation}
\Lambda_t \left(\hat{\rho}_S\right) = \sum_k \hat{\Omega}_k(t) \rho_S \hat{\Omega}_k^{\dagger}(t),
\end{equation}
where $\hat{\Omega}_k(t) \equiv \hat{\Omega}_{\alpha\beta}(t) = \sqrt{p_\alpha}\bra{\beta} \hat{U}(t) \ket{\alpha}$, $\hat{U}(t) = e^{-it\hat{\Ham}}$ and $\lbrace\ket{\alpha}\rbrace$ denotes an orthonormal basis of $\mathscr{H}_E$ such that $\hat{\rho}_E = \sum_{\alpha} p_{\alpha}\ket{\alpha}\bra{\alpha}$.
Local observables of the system are represented by self-adjoint bounded linear operators on $\mathscr{H}_S$ and form a set denoted by  $ \mathcal{B}(\mathscr{H}_S) $. 
The trace induces a duality form between $ \tau (\mathscr{H}_S ) $ and $ \mathcal{B}(\mathscr{H}_S ) $
\begin{align}\label{eq:HSscalarproduct}
\mathrm{Tr}: &\mathcal{B}(\mathscr{H}_S ) \times \tau (\mathscr{H}_S ) \rightarrow \mathbb{R} \notag\\
& (\hat{O}, \hat{\rho}) \mapsto\mathrm{Tr}\left[\hat{O}^{\dagger} \hat{\rho}\right]
\end{align}
which can be used to define a scalar product on $ \tau (\mathscr{H}_S) $, known as \emph{Hilbert-Schmidt} scalar product. The latter is widely employed in the so-called \textit{vectorization} procedure, which maps operators to vectors and maps (superoperators) to matrices.
To make things explicit, consider an orthonormal basis $ \lbrace \hat{\sigma}_{\alpha} \rbrace $ with respect to the Hilbert-Schmidt scalar product \eqref{eq:HSscalarproduct}, i.e.$\mathrm{Tr}\left[\hat{\sigma}_\alpha^\dagger\hat{\sigma}_\beta\right] = \delta_{\alpha,\beta}.$
Each operator $O$ can thus be expanded as
\begin{equation}\label{eq:OpDecomposition}
    \hat{O} = \sum_{\alpha} O_{\alpha} \hat{\sigma}_{\alpha}, \qquad O_{\alpha} =  \mathrm{Tr}\left[\hat{\sigma}_{\alpha}^{\dagger} \hat{O}\right] \in\mathbb{R}.
\end{equation}
The entries $O_\alpha$ can then be collected in a vector $\mathbf{O}$.
Similarly, each linear map $ \Lambda_t $ acting on a generic $\hat{\rho}\in\tau (\mathscr{H}_S)$ can be decomposed on this basis
\begin{equation}\label{eq:MapDecomposition}
\Lambda_t[\hat{\rho}] = \sum_{\alpha ,\beta} \Lambda_{\alpha \beta}(t) \mathrm{Tr}\left[\hat{\sigma}^{\dagger}_{\beta}\, \hat{\rho}\right] \hat{\sigma}_{\alpha},\quad  \Lambda_{\alpha \beta}(t) = \mathrm{Tr}\left[\hat{\sigma}^{\dagger}_{\alpha}\,\Lambda_t[\hat{\sigma}_{\beta}]\right]\in\mathbb{R}.
\end{equation}
The numbers $\Lambda_{\alpha \beta}(t)$ can be rearranged into a matrix $\bgreek{\Lambda}(t)$.
It is worth stressing that the particular Hilbert-Schmidt representation of an operator or superoperator depends on the choice of the basis while the expansions Eqs.~\eqref{eq:OpDecomposition}~\eqref{eq:MapDecomposition}. The vectorization procedure is widely employed, together with the fundamental relation that expresses linear transformations in terms of matrix multiplications: given three generic operators $\lbrace \hat{O}_i\rbrace_{i=1,2,3} \in\mathcal{B}(\mathscr{H}_S)$
\begin{equation}
    \hat{O}_1 \hat{O}_2 \hat{O}_3 \mapsto \left(\mathbf{O}_3^T\otimes\mathbf{O}_1\right)\mathbf{O}_2,
\end{equation}
with the superscript $`T`$ denoting the transpose.
A straightforward application is readily provided: given a Kraus decomposition of dynamical map, one can find its matrix representation as
\begin{equation}\label{eq:Kraustomap}
    \bgreek{\Lambda}_t = \sum_k \bgreek{\overline{\Omega}}_k(t) \otimes \bgreek{\Omega}_k(t),
\end{equation}
with $\bgreek{\overline{\Omega}}(t)$ denoting the complex conjugate of $\bgreek{\Omega}(t)$.

Finally, it is useful to introduce the concept of \textit{dual map} $\Lambda^{\ddagger}_t$ (sometimes also known as adjoint map). The latter is defined by the following relation
\begin{equation}\label{adjoint}
\mathrm{Tr}\left[\Lambda^{\ddagger}_t[\hat{O}] \, \hat{\rho} \right] = \mathrm{Tr}\left[\hat{O} \,\Lambda_t[\hat{\rho}] \right) , \quad \forall \hat{O} \in \mathcal{B}(\mathscr{H}_S), \hat{\rho} \in \tau(\mathscr{H}_S).
\end{equation}
As it is easy to check, the Hilbert-Schmidt matrix $\bgreek{\Lambda^{\ddagger}}(t)$ associated to the dual map coincides with the hermitian conjugate (i.e. conjugate transpose) of the matrix $\bgreek{\Lambda}(t)$, i.e.
\begin{equation}\label{adjointmatrix}
    \bgreek{\Lambda^{\ddagger}}(t) = \bgreek{\Lambda}^{\dagger}(t).
\end{equation}.

\section{B -- Proofs of the Theorem and Corollaries.}\label{Sec:Proofs}

In what follows, the symbols and quantities refer to the ones introduced in Section~\ref{Sec:Notation}.
It will be useful to introduce the interaction picture with respect to the system's free Hamiltonian, which allows to `separate` the purely coherent evolution generated by $\hat{\Ham}_S$ and the non-unitary part induced by the interaction with the environment. For this purpose, we will adopt the following notation
\begin{equation}
\Lambda_\tau [\bullet] = \mathcal{U}_{S,\tau} \circ \Lambda_{I,\tau}[\bullet] , 
\end{equation}
where $\mathcal{U}_{S,\tau} \left[\bullet \right] \equiv \hat{U}_S(\tau) \bullet \hat{U}_S^{\dagger}(\tau)$ (with $\hat{U}_S(\tau) = e^{-i \tau \hat{\Ham}_S}$) denotes the free system evolution superoperator and where $\Lambda_{I,\tau} [\bullet] = \sum_k \hat{\Omega}_{k,I}(\tau) \bullet \hat{\Omega}_{k,I}^{\dagger}(\tau) $ is the quantum channel in interaction picture with respect to the system's free evolution, with $\Omega_{k,I}(\tau) \equiv \hat{\Omega}_{\alpha\beta}(\tau) = \sqrt{p_\alpha} \bra{\beta} \hat{U}_I(\tau) \ket{\alpha}$, with $\hat{U}_I(\tau) = \hat{U}^{\dagger}_S(\tau)\hat{U}(\tau)$ and  $\hat{\rho}_A = \sum_\alpha p_\alpha\ket{\alpha}\bra{\alpha}$.
Finally, in light of the main focus of the present work, we will be interested in characterising the CPTP map
$\Lambda_{t}[\bullet] = \mathcal{U}_{S,\theta}\circ\Lambda_\tau[\bullet]$, where $t = \tau + \theta$.

\begin{theorem*}\label{Theorem1}
Let us consider a CPTP map $\tilde{\Lambda}_{\tau,\theta} [\bullet] = \mathcal{U}_{S,\theta}\circ\Lambda_\tau[\bullet]$and let $\hat{\omega}^D$ be its invariant state such that $\tilde{\Lambda}_{\tau,\theta}[\hat{\omega}^D] = \hat{\omega}^D$.
If there exists a system operator $\hat{\Xi}$ such that the following two conditions are satisfied
\begin{equation}
\mathrm{(i)}\, \left[\hat{\Ham}_S, \hat{\Xi}\right] = \lambda \,\hat{\Xi}, \qquad \mathrm{(ii)}\, \left[\hat{\Omega}_{k,I}(\tau), \hat{\Xi}\right] \hat{\omega}^D = 0,\,\forall k,\tau
\end{equation}
then the operator will evolve according to
\begin{equation}\label{Theorem:Evol}
\tilde{\Lambda}_{\tau,\theta}[\hat{\Xi} \hat{\omega}^D] = e^{-i\lambda (\tau+\theta)} \,\hat{\Xi}\,\hat{\omega}^D,\,\, \forall t,
\end{equation}
with $\lambda\in\mathbb{R}$.
\end{theorem*}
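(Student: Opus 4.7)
The plan is to peel the composite channel $\tilde{\Lambda}_{\tau,\theta}$ into its unitary and dissipative layers and apply conditions (i) and (ii) separately. Writing $\Lambda_\tau=\mathcal{U}_{S,\tau}\circ\Lambda_{I,\tau}$ with $\Lambda_{I,\tau}[\bullet]=\sum_k\hat{\Omega}_{k,I}(\tau)\bullet\hat{\Omega}_{k,I}^\dagger(\tau)$, one has $\tilde{\Lambda}_{\tau,\theta}=\mathcal{U}_{S,\theta}\circ\mathcal{U}_{S,\tau}\circ\Lambda_{I,\tau}$. Condition (ii) is tailored to govern the action of $\Lambda_{I,\tau}$ on $\hat{\Xi}\hat{\omega}^D$, while condition (i) governs how each free-evolution layer transforms $\hat{\Xi}$. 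The route is therefore to push $\hat{\Xi}$ through the dissipative block first, then collect a phase from each of the two unitary blocks, and finally invoke invariance of $\hat{\omega}^D$.

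First I apply $\Lambda_{I,\tau}$ to $\hat{\Xi}\hat{\omega}^D$: condition (ii) lets me rewrite $\hat{\Omega}_{k,I}(\tau)\hat{\Xi}\hat{\omega}^D=\hat{\Xi}\,\hat{\Omega}_{k,I}(\tau)\hat{\omega}^D$ in each summand, so $\hat{\Xi}$ factors out of the Kraus sum to give $\Lambda_{I,\tau}[\hat{\Xi}\hat{\omega}^D]=\hat{\Xi}\,\Lambda_{I,\tau}[\hat{\omega}^D]$. Second, from (i) a one-line induction on $[\hat{\Ham}_S,\hat{\Xi}]=\lambda\hat{\Xi}$ yields $e^{-it\hat{\Ham}_S}\hat{\Xi}\,e^{+it\hat{\Ham}_S}=e^{-i\lambda t}\hat{\Xi}$ for every $t$. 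Applying $\mathcal{U}_{S,\tau}$ and then $\mathcal{U}_{S,\theta}$ to the previous identity extracts phases $e^{-i\lambda\tau}$ and $e^{-i\lambda\theta}$ respectively, leaving $\hat{\Xi}$ on the left and $\mathcal{U}_{S,\theta}\circ\mathcal{U}_{S,\tau}\circ\Lambda_{I,\tau}[\hat{\omega}^D]=\tilde{\Lambda}_{\tau,\theta}[\hat{\omega}^D]=\hat{\omega}^D$ on the right, which reassembles into the claimed $\tilde{\Lambda}_{\tau,\theta}[\hat{\Xi}\hat{\omega}^D]=e^{-i\lambda(\tau+\theta)}\hat{\Xi}\hat{\omega}^D$.

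The main subtlety I expect is that (ii) is deliberately weak: it only controls $[\hat{\Omega}_{k,I}(\tau),\hat{\Xi}]$ after right-multiplication by $\hat{\omega}^D$, rather than as an operator identity. The argument nevertheless goes through cleanly because at every step of the manipulation $\hat{\omega}^D$ sits at the rightmost position, so each time $\hat{\Xi}$ is moved past an $\hat{\Omega}_{k,I}(\tau)$ the required commutator is immediately acting on $\hat{\omega}^D$ and the hypothesis applies as stated. A secondary observation is that reality of $\lambda$ is automatic, since $\mathrm{ad}_{\hat{\Ham}_S}=[\hat{\Ham}_S,\cdot]$ has real spectrum by Hermiticity of $\hat{\Ham}_S$; this forces $e^{-i\lambda(\tau+\theta)}$ to lie on the unit circle, placing $\hat{\Xi}\hat{\omega}^D$ squarely in the peripheral spectrum that drives the emergent periodicity claimed in the main text.
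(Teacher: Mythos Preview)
Your proof is correct and follows essentially the same route as the paper's: decompose $\tilde{\Lambda}_{\tau,\theta}=\mathcal{U}_{S,\theta}\circ\mathcal{U}_{S,\tau}\circ\Lambda_{I,\tau}$, use condition (ii) to pull $\hat{\Xi}$ through the Kraus sum (with $\hat{\omega}^D$ anchored on the right, exactly as you note), then use condition (i) to extract the phase from the unitary conjugation, and finally invoke $\tilde{\Lambda}_{\tau,\theta}[\hat{\omega}^D]=\hat{\omega}^D$. The only cosmetic differences are that the paper first merges the two free evolutions via the group law $\mathcal{U}_{S,\theta}\circ\mathcal{U}_{S,\tau}=\mathcal{U}_{S,\tau+\theta}$ and obtains $\mathcal{U}_{S,t}[\hat{\Xi}]=e^{-i\lambda t}\hat{\Xi}$ by solving the ODE $\dot{\hat{\sigma}}=-i\lambda\hat{\sigma}$ rather than by your Hadamard-lemma induction, and it establishes $\lambda\in\mathbb{R}$ via a trace argument on $[\hat{\Ham}_S,\hat{\Xi}^\dagger\hat{\Xi}]$ rather than your cleaner observation that $\mathrm{ad}_{\hat{\Ham}_S}$ is Hermitian under the Hilbert--Schmidt inner product.
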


\begin{proof}
First of all, let us consider the adjoint of condition $(i)$, which reads $ \left[\hat{\Ham}_S, \hat{\Xi}^{\dagger}\right]  = -\lambda^* \hat{\Xi}^{\dagger}$. Then the superoperator $\Psi\left(\bullet\right) \equiv \left[\hat{\Ham}_S, \hat{\Xi}^{\dagger}\hat{\Xi}\right]\bullet $ is clearly skew-hermitian, being a commutator of two hermitian operators and consequently its spectrum is purely imaginary. This in particular implies that the Hilbert-Schmidt inner product of any self-adjoint operator $\hat{A}=\hat{A}^{\dagger}\in\mathcal{B}(\mathscr{H}_S)$, one has that $\langle\langle \hat{A}\Psi (\hat{A})\rangle\rangle \equiv \mathrm{Tr}_{S} \, \hat{A} \Psi (\hat{A}) = i\theta $, $\theta\in\mathbb{R}$. On the other hand, by exploiting the above condition $(i)$, one finds that
\begin{align}
\langle\langle \hat{\Id} \Psi (\hat{\Id})\rangle & =\mathrm{Tr}_{S} \, \hat{\Id} \left[\hat{\Ham}_S, \hat{\Xi}^{\dagger}\hat{\Xi}\right] \hat{\Id} \notag\\
&= \mathrm{Tr}_{S}\, \hat{\Id}\left(\hat{\Xi}^{\dagger}\left[\hat{\Ham}_S, \hat{\Xi}\right] +  \left[\hat{\Ham}_S, \hat{\Xi}^{\dagger}\right]\hat{\Xi} \right)\hat{\Id} \notag\\
&\overset{(i)}{=}\mathrm{Tr}_{S}\, \hat{\Id} \hat{\Xi}^{\dagger} (\lambda \hat{\Xi} \hat{\Id}) - \lambda^*\hat{\Id} \hat{\Xi}^{\dagger} \hat{\Xi} \hat{\Id}\notag\\
&= (\lambda - \lambda^*) \mathrm{Tr}_{S}\, \hat{\Xi}^{\dagger} \hat{\Xi} \notag\\
&= (\lambda - \lambda^*)  ||\hat{\Xi}||_1.
\end{align}
Since the trace-norm of any product of bounded operator and density matrix is always $\geq 0$, this implies that $\lambda = \lambda^*$, i.e. $\lambda \in \mathbb{R}$.

Let us now calculate $\tilde{\Lambda}_{\tau,\theta}[\hat{\Xi}\hat{\omega}^D]$. Straightforward algebra then allows to show that
\begin{align}\label{firststepproof}
    \tilde{\Lambda}_{\tau,\theta}[\hat{\Xi}\hat{\omega}^D]  &\equiv \mathcal{U}_{S,\theta}\circ\Lambda_{\tau}[\hat{\Xi}\hat{\omega_D}] = \mathcal{U}_{S,\theta}\circ\mathcal{U}_{S,\tau}\circ\Lambda_{I,\tau}[\hat{\Xi}\hat{\omega_D}] = \mathcal{U}_{S,\theta+\tau}\circ\Lambda_{I,\tau}[\hat{\Xi}\hat{\omega_D}]\notag\\
    &=\hat{U}_S(\tau+\theta) \Lambda_{I,\tau}\left[\hat{\Xi}\,\hat{\omega}^D\right] \hat{U}_S^{\dagger}(\tau+\theta) = \hat{U}_S(\tau+\theta) \left( \sum_k \hat{\Omega}_{k,I}(\tau) \hat{\Xi}\,\hat{\omega}^D \hat{\Omega}_{k,I}^{\dagger}(\tau) \right) \hat{U}_S^{\dagger}(\tau+\theta) \notag\\
    &\overset{(ii)}{=} \hat{U}_S(\tau+\theta) \left( \hat{\Xi} \sum_k \hat{\Omega}_{k,I}(\tau) \hat{\omega}^D \hat{\Omega}_{k,I}^{\dagger}(\tau) \right) \hat{U}_S^{\dagger}(\tau+\theta)\notag\\
    &= \hat{U}_S(\tau+\theta) \hat{\Xi} \hat{U}_S^{\dagger}(\tau+\theta) \hat{U}_S(\tau+\theta)\left(\Lambda_{I,\tau}[\hat{\omega}^D]\right)\hat{U}_S^{\dagger}(\tau+\theta)\notag\\
    &= \mathcal{U}_{S,\tau+\theta}\left[\hat{\Xi}\right] \mathcal{U}_{S,\tau+\theta}\circ\Lambda_{I,\tau}[\hat{\omega}^D] = \mathcal{U}_{S,\tau+\theta}\left[\hat{\Xi}\right] \tilde{\Lambda}_{\tau,\theta}[\hat{\omega}^D]\notag\\
    &= \mathcal{U}_{S,\tau+\theta}\left[\hat{\Xi}\right] \hat{\omega}^D,
\end{align}
where in the first line the group composition law of unitary superoperators $\mathcal{U}_{S,\tau}\circ\mathcal{U}_{S,\theta} = \mathcal{U}_{S,\tau+\theta}$ has been used, the third line was obtained from the second by making use of (ii), an identity $\hat{\Id} = \hat{U}_S^{\dagger} \hat{U}_S $ was inserted in the fourth line and the property of $\hat{\omega}^D$ to be the invariant state of the composite map $\tilde{\Lambda}_{\tau,\theta}$ was used to get the last line.

To complete the proof, one needs to compute $\mathcal{U}_{S,\tau+\theta}\left[\hat{\Xi}\right]$. In order to do that, let us for brevity denote $t \equiv \tau+\theta$ and let us consider the time derivative of the above expression with respect to $t$. By making use of the explicit expression for $\hat{U}_S(t)$, i.e. $\hat{U}_S(t) = e^{-it\hat{\Ham}_S}$, as well as the trivial relation $\left[\hat{U}_S(t), \hat{\Ham}_S\right] = 0$, one obtains
\begin{align}\label{eq:almost}
    &\frac{d \mathcal{U}_{S,t}\left[\hat{\Xi}\right]}{dt} \notag\\
    & \; = \hat{U}_S(t) \left(-i\hat{\Ham}_S\right)\hat{\Xi} \hat{U}_S^{\dagger}(t) + \hat{U}_S(t) \hat{\Xi}\left(i\hat{\Ham}_S\right) \hat{U}_S^{\dagger}(t)\notag\\
    & \;=  -i \hat{U}_S(t) \left[\hat{\Ham}_S, \hat{\Xi}\right] \hat{U}_S^{\dagger}(t)\notag\\
    & \;\overset{(i)}{=} -i \hat{U}_S(t) \lambda \hat{\Xi}\hat{U}_S^{\dagger}(t)\notag\\
     & \;= -i\lambda\, \mathcal{U}_{S,t}\left[\hat{\Xi}\right],
\end{align}
where condition (i) was used to obtain the second-to-last line.
Let us now introduce the operator $\hat{\sigma}(t) \equiv \mathcal{U}_{S,t}\left[\hat{\Xi}\right]$, which obviously satisfies the boundary condition $\hat{\sigma}(0) = \hat{\Xi}$. Then, Eq.~\eqref{eq:almost} can be compactly expressed as 
\begin{align}
 &\frac{d}{dt}\hat{\sigma}(t) = -i\lambda \hat{\sigma}(t) \notag\\
 &\hat{\sigma}(0) = \hat{\Xi},
\end{align}
which leads to the solution
\begin{equation}\label{eq:result}
    \hat{\sigma}(t) \equiv \mathcal{U}_{S,t}\left[\hat{\Xi}\right] = e^{-i\lambda t}\hat{\Xi}.
\end{equation}
Inserting Eq.~\eqref{eq:result} into Eq.~\eqref{firststepproof} and re-expressing $t = \tau+\theta$ finally allows to demonstrate that
\begin{equation}
\tilde{\Lambda}_{\tau,\theta}\left[\hat{\Xi}\hat{\omega}^D\right] = e^{-i\lambda (\tau+\theta)}\hat{\Xi}\,\hat{\omega}^D.
\end{equation}
\end{proof}

Here below we furthermore study which form conditions (i) and (ii) of the previous Theorem take in the limiting regime where the randomized collision model dynamics described by Eq.~\eqref{totalmap} reduces to a Lindblad master equation.
For the latter to be valid, one needs to consider the case where the collisions stop happening at random times but occur periodically after a time $\theta$, which then becomes a fixed quantity. This implies that the underlying dynamics described by this collision model acquires a time-periodic symmetry, i.e. the dynamics is given in terms of a repeated interaction scheme with period $\tau+\theta$. For simplicity and without loss of generality, let us consider the case where $\theta = 0$; the case where $\theta$ is a finite is straightforwardly obtained by a simple change of variable $\tau \to \tau+\theta$.

\begin{corollary}
Let us consider a state $\hat{\omega}^D$ and an operator $\hat{\Xi}$ such that they satisfy the conditions in Theorem 1. Let us moreover assume that the interaction time $\tau$ is vanishing small while the system-environment interaction Hamiltonian scales as $\hat{\Ham}_{SE}(\tau) =\tau^{-1/2} \hat{V}_{SE}$, and that the initial state of the environment $\hat{\rho}_E = \sum_{\alpha} p_{\alpha}\ket{\alpha}\bra{\alpha}$ satisfies the KMS condition ~\cite{BreuerPetruccione}. Then, the conditions (i) and (ii) of Theorem 1 become
\begin{itemize}
\item[(i')] $  \left[\hat{\Ham}_S,\hat{\Xi}\right] = \lambda \hat{\Xi}, $
\item[(ii')] $\left[\hat{L}_k(\tau),\hat{\Xi}\right] \hat{\omega}^D = \left[\hat{L}^{\dagger}_k(\tau),\hat{\Xi}\right] \hat{L}_k(\tau) \hat{\omega}^D  0, \qquad \forall k$,
\end{itemize}
where $\hat{L}_k(\tau) \equiv \hat{L}_{\alpha\beta}(\tau) = \sqrt{p_\alpha/t} \bra{\beta} \hat{V}_{SE} \ket{\alpha}$ is a Lindblad operator. 
\end{corollary}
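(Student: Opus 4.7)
Since $(\mathrm{i}')$ is merely $(\mathrm{i})$ restated, the content of the Corollary lies in showing that $(\mathrm{ii})$ reduces to the pair of equations $(\mathrm{ii}')$ in the Lindblad limit. The plan is to Taylor-expand each interaction-picture Kraus operator in powers of $\sqrt{\tau}$, substitute into $[\hat\Omega_{k,I}(\tau),\hat\Xi]\hat\omega^D=0$, and impose the identity order by order in $\sqrt{\tau}$, using the fact that condition $(\mathrm{ii})$ must hold for every sufficiently small $\tau$. With the scaling $\hat\Ham_{SE}(\tau)=\tau^{-1/2}\hat V_{SE}$, expanding $\hat U(\tau)=\exp[-i\tau(\hat\Ham_S+\hat\Ham_E)-i\sqrt{\tau}\hat V_{SE}]$ and projecting onto the environmental eigenbasis yields
\begin{align*}
\hat\Omega_{\alpha\beta,I}(\tau)&=-i\sqrt{\tau}\sqrt{p_\alpha}\bra{\beta}\hat V_{SE}\ket{\alpha}+O(\tau),\qquad \alpha\neq\beta,\\
\hat\Omega_{\alpha\alpha,I}(\tau)&=\sqrt{p_\alpha}\Bigl[\,\one-i\sqrt{\tau}\bra{\alpha}\hat V_{SE}\ket{\alpha}-\tfrac{\tau}{2}\sum_\gamma\bra{\alpha}\hat V_{SE}\ket{\gamma}\bra{\gamma}\hat V_{SE}\ket{\alpha}\Bigr]+O(\tau^{3/2}),
\end{align*}
where the free-system Hamiltonian $\hat\Ham_S$ cancels after passing to the interaction picture. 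The Lindblad operator $\hat L_{\alpha\beta}(\tau)=\sqrt{p_\alpha/\tau}\bra{\beta}\hat V_{SE}\ket{\alpha}$ defined in the statement thus appears as the coefficient of the leading off-diagonal correction, up to a $\sqrt{\tau}$ normalisation.

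At leading order $O(\sqrt{\tau})$, condition $(\mathrm{ii})$ applied to the off-diagonal Kraus operators produces $[\hat L_{\alpha\beta},\hat\Xi]\hat\omega^D=0$, which, together with the analogous condition coming from $\hat\Omega_{\alpha\alpha,I}$ at the same order, is the first equation of $(\mathrm{ii}')$. At order $O(\tau)$ the diagonal Kraus operators then yield
\[
\Bigl[\sum_\gamma \hat L_{\alpha\gamma}^\dagger\hat L_{\alpha\gamma},\hat\Xi\Bigr]\hat\omega^D=0.
\]
Expanding by Leibniz, $[\hat L^\dagger\hat L,\hat\Xi]=\hat L^\dagger[\hat L,\hat\Xi]+[\hat L^\dagger,\hat\Xi]\hat L$, the first summand is annihilated on $\hat\omega^D$ by the just-established first equation of $(\mathrm{ii}')$, leaving $\sum_\gamma[\hat L_{\alpha\gamma}^\dagger,\hat\Xi]\hat L_{\alpha\gamma}\hat\omega^D=0$, which is the second equation of $(\mathrm{ii}')$. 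The Hamiltonian contribution at the same order, $-i\tau[\hat\Ham_S,\hat\Xi]\hat\omega^D=-i\lambda\tau\hat\Xi\hat\omega^D$ by $(\mathrm{i})$, is not an algebraic obstruction: it is exactly what reproduces the $e^{-i\lambda(\tau+\theta)}$ phase predicted by Eq.~\eqref{theorem:Evol}.

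The KMS assumption on $\hat\rho_E$ enters in two places: it ensures that the short-time expansion of $\Lambda_\tau$ assembles into a \emph{bona fide} GKSL generator whose invariant state coincides with $\hat\omega^D$ (so that the unitary and dissipative pieces separate cleanly), and it fixes the environmental weights so that the summed condition over $\gamma$ can be resolved into a per-$k$ statement within each Bohr-frequency sector of the emerging Lindbladian. The main obstacle in executing the proof is the careful bookkeeping at $O(\tau)$ for the diagonal Kraus operators: one must track the time-ordered exponential, the interaction-picture transformation of the free-system propagator, and verify that the $\hat V_{SE}^2$ contribution combines with the Lamb-shift-like terms into the correct GKSL anticommutator $\tfrac{1}{2}\sum_k \hat L_k^\dagger\hat L_k$. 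A convenient cross-check is to bypass the expansion altogether and impose the eigenmode equation $\mathcal{L}[\hat\Xi\hat\omega^D]=-i\lambda\,\hat\Xi\hat\omega^D$ directly on the GKSL generator $\mathcal{L}$; combining it with $\mathcal{L}[\hat\omega^D]=0$ and $(\mathrm{i})$ recovers exactly the same two equations of $(\mathrm{ii}')$, as in Ref.~\cite{buvca2019non}.
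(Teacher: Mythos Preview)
Your proposal is correct but proceeds by a different route than the paper. The paper's own proof does not expand the Kraus operators: it first shows that in the stated scaling limit the map $\Lambda_\tau$ reduces to a GKSL generator $\mathcal{L}_\tau$ with $\mathcal{L}_\tau(\hat\omega^D)=0$ (this is where the KMS assumption is used, to set $\mathrm{Tr}_E[\hat V_{SE}\hat\rho_E]=0$), and then simply invokes Ref.~\cite{buvca2019non} to translate conditions $(\mathrm{i})$--$(\mathrm{ii})$ into $(\mathrm{i}')$--$(\mathrm{ii}')$ at the level of the Lindbladian. In other words, what you present as a ``cross-check'' in your final paragraph is essentially the paper's entire argument.

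Your main route---Taylor-expanding each $\hat\Omega_{k,I}(\tau)$ in powers of $\sqrt{\tau}$ and imposing $(\mathrm{ii})$ order by order---is more explicit and self-contained: it shows transparently that the first equation of $(\mathrm{ii}')$ comes from the $O(\sqrt{\tau})$ term of every Kraus operator, while the second comes from the $O(\tau)$ piece of the diagonal ones after using Leibniz and the first equation. The trade-off is the point you already flag: the diagonal $O(\tau)$ condition delivers only $\sum_\gamma[\hat L_{\alpha\gamma}^\dagger,\hat\Xi]\hat L_{\alpha\gamma}\hat\omega^D=0$ for each $\alpha$, not the per-$k$ identity stated in $(\mathrm{ii}')$. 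Upgrading the summed relation to an individual one genuinely requires the Bohr-frequency decoupling furnished by the KMS/secular structure, and that step is exactly what the paper outsources to Ref.~\cite{buvca2019non}. So both arguments ultimately lean on the same external input at the same place; yours just makes the dependence visible rather than hiding it behind a citation.
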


\begin{proof}
Being a stationary state of the dynamics, $\hat{\omega}^D$ satisfies the following equation 
\begin{equation}\label{eq:map}
\hat{\omega}^D = \Lambda_\tau(\hat{\omega}^D) = \mathrm{Tr}_E\, \hat{U}(\tau) \hat{\omega}^D\otimes\hat{\rho}_E \hat{U}^{\dagger}(\tau),
\end{equation}
where $\hat{U}(\tau)$ denotes the unitary evolution operator acting on the composite system.
Performing the above mentioned limit $\lim_{\tau\to 0^+}$ means Taylor-expanding the unitary evolution operator in powers of $\tau$; exploiting the particular scaling of the interaction Hamiltonian  and retaining only the terms up to first order in $t$ leads to
\begin{equation}\label{expansion}
\hat{U}(\tau)  = \hat{\Id} + i \sqrt{\tau} \hat{V}_{SE} - \tau \left( i\hat{\Ham}_S+i\hat{\Ham}_E+\hat{V}_{SE}\right).
\end{equation}
The choice of initial state for the environment, i.e. one that satisfies the KMS condition (e.g. a thermal state) guarantees that, without loss of generality, we can set $\mathrm{Tr}_E \hat{V}_{SE}\hat{\rho}_E = 0$. By exploiting this fact,  inserting Eq.~\eqref{expansion} into Eq.~\eqref{eq:map} then leads after a little algebra to
\begin{equation}\label{eq:Lindblad}
\mathcal{D}_\tau(\hat{\omega}^D) = -i\left[\hat{\Ham}_S,\hat{\omega}^D\right],
\end{equation}
where we introduced the superoperator $\mathcal{D}_\tau(\hat{\rho}) \equiv \sum_k \left(\hat{L}_k(\tau) \hat{\rho} \hat{L}^{\dagger}_k(\tau) - \frac{1}{2}\lbrace \hat{L}^{\dagger}_k(\tau) \hat{L}_k(\tau),\hat{\rho}\rbrace\right)$ and where $\hat{L}_k(\tau) \equiv \hat{L}_{\alpha\beta}(\tau) = \sqrt{p_\alpha/\tau} \bra{\beta} \hat{V}_{SE} \ket{\alpha}$. It is now immediate to recognize that Eq.~\eqref{eq:Lindblad} corresponds to 
\begin{equation}\label{eq:peripheral}
\mathcal{L}_\tau(\hat{\omega}^D) = 0,
\end{equation}
with $\mathcal{L}_\tau$ being a superoperator in Gorini-Kossakowski-Sudarshan-Lindblad form. Eq.~\eqref{eq:peripheral} thus implies that $\hat{\omega}^D$ corresponds to the fixed point of the Lindblad superoperator and thus belongs to its so-called \textit{peripheral spectrum}.
By thus retracing the same steps performed in the Proof of Supplementary Theorem 1 of~\cite{buvca2019non}, it then follows that conditions (i) and (ii) translate in this limiting case to (i') and (ii'). 
\end{proof}

\section{C -- Additional considerations on the peripheral spectrum of CPTP maps and their composition.}
\label{App:CPTPmapsproperties}

The CPTP nature of a dynamical map $\Lambda$ allows to draw general insightful considerations on its spectrum, of relevance for the present work. First of all, its complete positivity guarantees that it is always at least ``Jordan-diagonalizable''. The set of all proper or generalized left and right eigenoperators, denoted respectively with $\lbrace\hat{\omega}_\mu\rbrace$ and $\lbrace\hat{\zeta}_\mu\rbrace$, forms a basis in the system's operators space $\mathcal{B}(\mathscr{H}_S)$ according to the Hilbert-Schmidt scalar product, i.e.
$\mathrm{Tr}\left[\hat{\zeta}_{\mu}\,\hat{\omega}_\nu \right] = \delta_{\mu\nu}$.
Due to the fact that every CPTP map is a \textit{contraction}, its complex eigenvalues satisfy the constraint $|\lambda_\mu| \leq 1$, i.e. they all lie either on the border or within the unit circle of the complex plane ~\cite{Wolf_2012,bruzda2009random,riera2020time}. 
Finally, being trace-preserving implies that the subset of invariant eigenoperators will always consist of at least one element.
Furthermore, if there exists at least one $\mu$ such that $(\lambda_\mu)^l=1$, with $l \in [2,+\infty)$, then the CPTP map has multiple invariant states. 
By having observed that these eigenvalues lie on the complex unit circle it is easy to finally realize that they always come in pair conjugates.

Assuming that the only persistently oscillating eigenmodes in the full map are those satisfying the conditions of Th.~\ref{Theorem1} for each map in the composition \eqref{totalmap}. In that case all the CPTP maps in the composition have the same persistently oscillating eigenmodes. Denote with $P$ a superprojector to this subspace and $Q=\one-P$.Then for any state $P \hat{\rho}=0$ we have with $\Lambda_j=\Lambda_\tau \mathcal{U}_{S,\theta}$,
\begin{equation}
\prod_j {\Lambda_j}\rho=\prod_j Q{\Lambda_j }Q \hat{\rho}.
\end{equation}
Moreover, $R_j=Q{\Lambda_j}Q$ all have eigenvalues strictly inside the unit circle because the spectral gap for each $\Lambda_j$ is non-zero. A finite spectral gap follows from the assumptions that 1. the Hilbert space is finite dimensional and, 2. as the gap should be a continuous function of the norm of collision CPTP part of the map $\Lambda_\tau$ and the collision times because roots of the characteristic polynomial are continuous functions of the matrix elements. In other words, if we keep $\Lambda_\tau$ and collision time finite the spectral gap is finite for each $ \Lambda_j$.  Therefore, the maximal singular value of each satisfies $Re[\sigma(Q{\Lambda_j}Q)]<1$, which is the 2-norm $\sigma$. Using sub-multiplicativity of the norm $||R_1 R_2...|| \le||R_1||||R_2...||=\le \sigma_1 ||R_2...||$ repeatedly we get that the norm of the product goes to 0. This implies that in the long-time limit the peripheral spectrum of the full map is the one given by the conditions of Th. 1 on each individual map. We note that this is the case in all the examples studied. 

\section{D -- The long-range Ising model example}\label{Sec:IsingModel}

As an example related to the one in the Main text we consider a one-dimensional Ising model with long-range interactions described by the Hamiltonian
\begin{equation}
    \label{Ising_Hamiltonian}
    \hat{\Ham}_S = \sum_{i=1}^M B \hat{\sigma}_z^{(i)} + \sum_{i<j} J_{ij} \hat{\sigma}^{(i)}_x \hat{\sigma}^{(j)}_x,
\end{equation}
where $B$ is an on-site magnetic field and exchange interactions are proportional to $J_{ij} = J/|i-j|^\alpha$ with the exponent $\alpha$ describing the distance dependence. Notably, this model is not translation-invariant and lacks the U(1) symmetry of the XXZ ring considered in the main text.  Nevertheless, this system shows similar physics, as we will show shortly. Importantly, it has been implemented in an ion-trap architecture and studied in a series of recent experiments~\cite{Kim2009,Islam2013,Jurcevic2014,Jurcevic2015}. In particular we will present the results of the simulations obtained by choosing the parameters from the experiments reported in Ref.~\cite{Jurcevic2015}, i.e. $M=7$, $\alpha = 1.1$, and $B=5J$. This system interacts through a randomized collision model scheme with a series of qubit ancillae coupled to the central site of the chain, via the interaction Hamiltonian 
\begin{equation}
    \label{Ising_ancilla}
    \hat{\Ham}_{SA} = B\hat{\sigma}_z^{(A)} + g\left( \hat{\sigma}_x^{(A)}\hat{\sigma}_x^{(4)} + \hat{\sigma}_y^{(A)}\hat{\sigma}_y^{(4)} \right),
\end{equation}
where $g=\sqrt{\Gamma/4\tau}$. The ancillae are once again initialised in the ground state. 

In Fig.~\ref{fig:Ising} we plot results for the odd/even population imbalance, $\mathcal{I} = \sum_{j \,\rm odd} \langle \hat{\sigma}_z^{(j)}\rangle - \sum_{j \, \rm even} \langle \hat{\sigma}_z^{(j)}\rangle$, as a function of time starting from a random, pure initial state. We find that the imbalance generically exhibits oscillations in the long-time limit. Interestingly, we find that the frequency of the oscillations is the same for each random preparation, but the phase and amplitude of the resulting oscillations differ. The random phase of the oscillations in the long-time limit may be understood to be analogous to the random orientation of the order parameter in the ordered phase of a symmetry-breaking phase transition. 

\begin{figure}
    \centering
    \begin{minipage}{0.48\linewidth}
        \includegraphics[width=\linewidth]{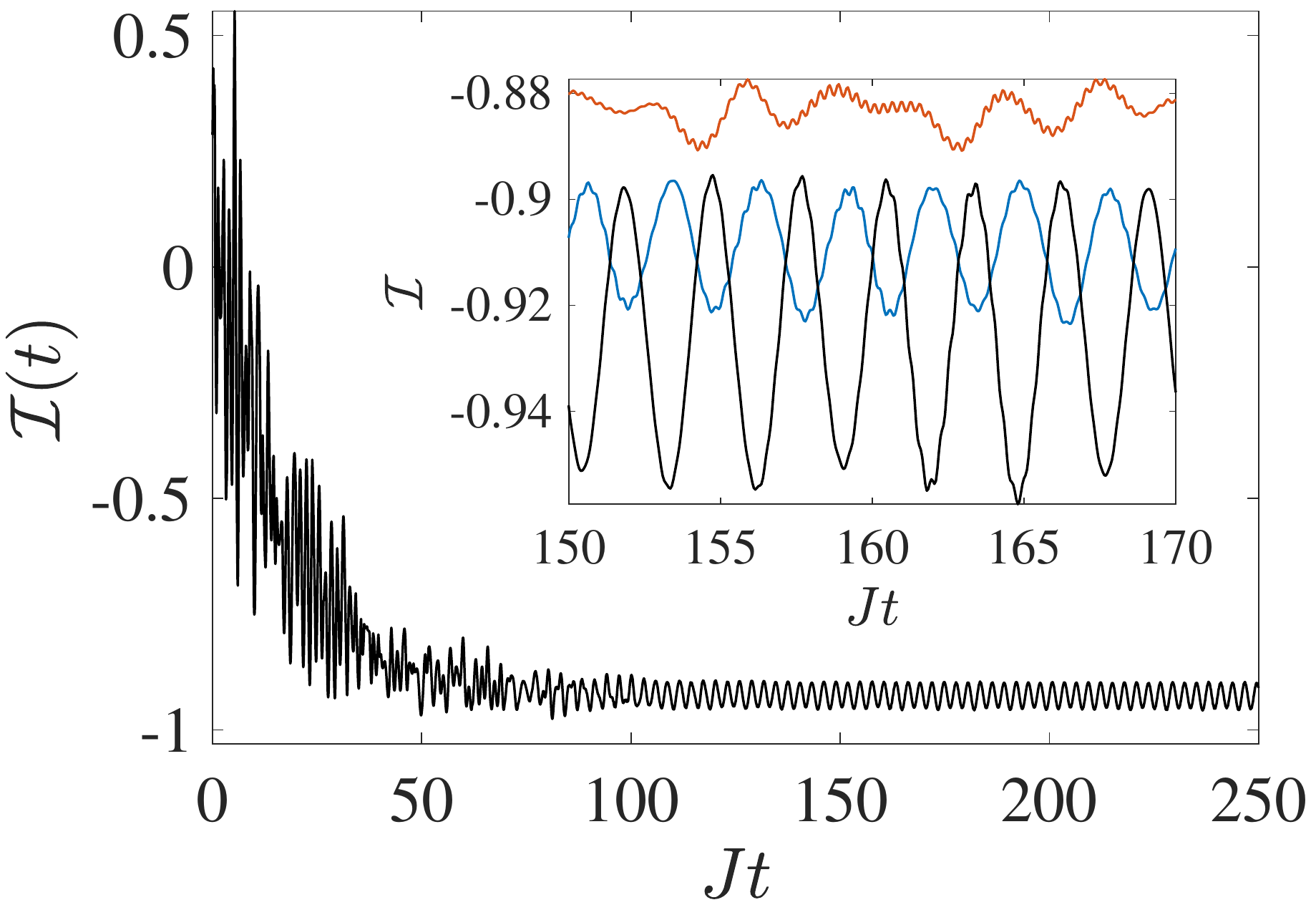}
    \end{minipage}
        \begin{minipage}{0.48\linewidth}
        \includegraphics[width=\linewidth]{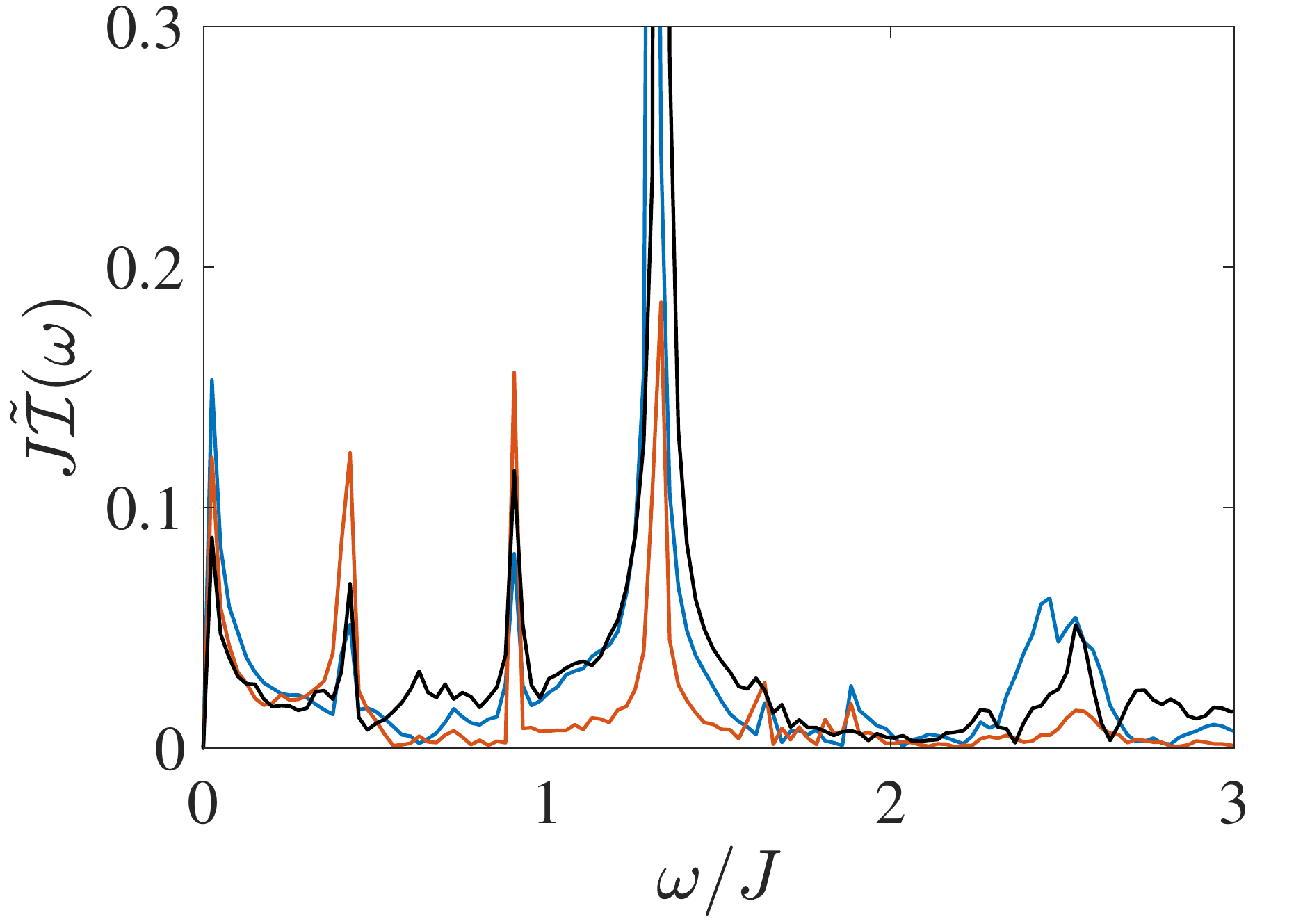}
    \end{minipage}
    \caption{(Left panel) Time evolution of the odd/even population imbalance for the long-range Ising model. The inset shows a close-up of the oscillations at long times, for three different random initial conditions. (Right panel) Frequency spectrum of the long-time imbalance, for the same initial conditions. A dominant Fourier peak is clearly visible. To evaluate the Fourier transform we take only the long-time data for $t>t^*$, with $Jt^*=100$. Parameters: $M=7$,  $B=5J$, $\Gamma = 4J$, $\tau J = 0.5$ and $\gamma \tau = 0.5$.}
    \label{fig:Ising}
\end{figure}

\end{document}